\documentclass[12pt]{article}
\usepackage[margin=1in]{geometry}
\usepackage{amsmath,amssymb,amsthm,amsfonts}
\usepackage{graphicx}
\usepackage{booktabs}
\usepackage{natbib}
\usepackage[ruled,vlined]{algorithm2e}
\usepackage{enumitem}
\usepackage[hyperfootnotes=false]{hyperref}
\hypersetup{colorlinks=true,citecolor=blue,linkcolor=black,urlcolor=black,filecolor=black}
\usepackage{setspace}
\usepackage{microtype}
\theoremstyle{plain}
\newtheorem{theorem}{Theorem}
\newtheorem{proposition}{Proposition}
\newtheorem{lemma}{Lemma}
\newtheorem{corollary}{Corollary}
\theoremstyle{definition}

\newcommand{\E}{\mathbb{E}}

\newcommand{\Var}{\operatorname{Var}}
\newcommand{\Cov}{\operatorname{Cov}}
\newcommand{\Ni}{\mathcal{N}_i}

\title{Whom Should a Platform Amplify? Truth, Engagement, and Networked Polarization}
\author{Zaruhi Hakobyan\thanks{email: zaruhi.hakobyan@uni.lu. I thank Vladimir Asiryan, Matthew O. Jackson, Konstantin Sonin, Alexei Zakharov, Sergei Guriev,  Kemal Kivanc Akoz, Eren Arbatli, Benjamin Holcblat, Christos Koulovatianos, and Roberto Steri for their valuable feedback. I am also grateful to the participants of the job market course at the University of Luxembourg, the 2024 Dynamic Games and Applications workshop, LAGV 2024, the Ninth Annual Conference on Network Science and Economics (Minneapolis, 2024), the Armenian Economic Association, the 2023 Economics and Management of Networks conference (Palermo), and the ARS'23 Ninth International Workshop on Social Network Analysis. I used ChatGPT, Claude and Refine.ink for proofreading. The usual disclaimer applies. I acknowledge financial support from the FNR CORE C23/SC/18098569/POPULISM grant project. This manuscript substantially revises an earlier working paper ``Can a social planner manipulate network dynamics and solve coordination problems?'' (SSRN No.~3587024).}\\[2pt]
\normalsize University of Luxembourg}
\begin{document}
\pagestyle{plain}
\setcounter{page}{1}
\maketitle

\begin{abstract}
Social-media platforms allocate attention by deciding whose content receives
reach. We study this problem in a network coordination model in which users
learn from observed signals but also care about coordinating with others.
Amplifying a source, therefore, spreads information while potentially making that
source a more influential coordination reference. For any finite directed
network, we show that the effect of an intervention on truth-tracking accuracy
decomposes into an information gain and a change in network-amplified bias
risk. The latter depends on how sources' influence profiles interact with the
network's entire bias composition, so the most informative source need not be
the best source to amplify. Broadcasting can improve the use of common
information, while amplification of directionally biased sources can reduce
accuracy. Integrated communities can outperform segregated ones; users may
reject informative cross-cutting links, and engagement-oriented platforms may
favor connections that reduce collective accuracy.

\medskip
\noindent\textbf{Keywords:} algorithmic amplification; platform design;
information aggregation; social networks; coordination; polarization;
engagement.\\
\noindent\textbf{JEL Classification:} C72, D82, D83, D85, L82,  L86.
\end{abstract}

\newpage
\section{Introduction}\label{sec:intro}

Social-media platforms allocate attention by deciding whose content appears in users' feeds and which sources receive greater reach. Recommendation systems determine which posts become widely visible and which users remain comparatively obscure. These choices are often guided by engagement---time, clicks, likes, and other measures of user activity---rather than by the accuracy of the beliefs or actions that emerge on the platform \citep{zhuravskaya2020,aridor2024}. They can also materially change exposure. Evidence from Twitter shows that ranking algorithms amplify some political content more than others \citep{huszar2022}, while experimental evidence from Facebook shows that feed algorithms affect both the composition of users' information environments and their activity on the platform \citep{guess2023}. Platforms can also intervene directly in the reach of particular users. In 2023, Twitter engineers reportedly modified the recommendation algorithm to increase the visibility of Elon Musk's posts, including to users who did not follow him \citep{paul2023}. These examples illustrate a basic design margin: a platform can change who sees a source without changing what that source knows. This paper asks when reallocating reach improves collective accuracy, whom an accuracy-oriented platform should amplify, and how these choices change when the platform instead values engagement.

The central trade-off is that amplification can affect behavior through two channels. First, showing a source to more users spreads the information contained in that source's signal. Second, when users care about coordinating with others, greater exposure can also make the source a more important social reference. Amplifying a precise source may therefore improve learning, while amplifying a directionally biased source may simultaneously pull actions away from the state. The same tension arises more generally whenever a platform changes who observes whom in an environment with dispersed information and strategic interaction: changing exposure can alter both what agents know and whose actions become focal for coordination.

We study this problem in a network coordination game with dispersed information and heterogeneous directional preferences. Users want their actions to track an unknown fundamental---the correct assessment of a news event, a policy, or a market condition---but they also care about coordinating with others, as in the beauty-contest motive of \citet{keynes1936} and \citet{morrisshin2002}. Each user observes a public signal, her own private signal, and the private signals of sources appearing in her feed. A directed link $i\to j$ means that user $i$ observes source $j$. In the baseline, the same link also makes $j$ part of the reference group relevant for $i$'s coordination payoff. Users may prefer actions shifted in different directions, but these preferences do not distort their beliefs: all observed signals are processed according to Bayes' rule. The platform changes only the observation network, and therefore the allocation of reach across sources.

The equilibrium separates the two forces analytically. For every fixed network, the unique linear equilibrium consists of signal loadings that do not depend on users' directional preferences and a directional component that does not depend on signal precision. This separation allows us to ask what a network intervention does to information and what it does to directional influence without mixing the two effects. Our main result gives an exact characterization for any finite directed network. Moving from a baseline network $M_0$ to a new network $M_1$ changes truth-tracking accuracy according to
\textit{information gain
minus
change in quadratic bias risk}.

The first term captures the reduction in uncertainty generated by the new pattern of exposure. The second captures how the intervention changes the propagation of users' directional preferences through the coordination network.

The bias term has a simple network interpretation. Each source has an equilibrium influence profile describing how a unit directional preference at that source affects the actions of users throughout the network. Collecting the overlaps among these profiles produces an \emph{influence-overlap matrix}. Its diagonal elements measure how strongly each source's own directional preference propagates, while its off-diagonal elements measure whether the influence of different sources reinforces or offsets one another. Same-sign preferences reinforce through these cross terms; opposite-sign preferences can partially cancel. The relevant object is therefore not the centrality of an amplified source in isolation, but how that source's influence fits into the bias composition already present in the network.

This yields a central source-selection implication. The best source to amplify cannot generally be identified from signal precision or own bias alone. A highly informative source may be unattractive when amplification makes her directional preference reinforce distortions already present in the network. Conversely, a source with a larger own bias can be less harmful when her influence offsets existing distortions. When only one source is biased, these interactions disappear and the criterion reduces to a transparent threshold: amplification is desirable when its information gain is large enough to compensate for the increase in the source's network-amplified bias cost. The corresponding influence profile has a discounted-walk representation closely related to Katz--Bonacich influence \citep{katz1953,bonacich1987,ballester2006,jackson2008}.

We next use several benchmarks to show how this general trade-off operates. First, we isolate the information channel in a symmetric global-broadcast environment with uniform coordination weights. The benchmark connects to the transparency result of \citet{morrisshin2002}, in which more precise public information can reduce truth-tracking accuracy because agents place excessive strategic weight on a common signal. Broadcasting one private signal creates a second signal that is common in use. In the symmetric benchmark, this is equivalent for accuracy to increasing effective common precision by the precision of the broadcast signal, and it eliminates the Morris--Shin non-monotone region. The result contributes to the literature on the social value of public information \citep{hellwig2005,svensson2006,morrisshin2006,angeletospavan2007,cornandheinemann2008,jameslawler2011} by showing how the allocation of reach can change transparency comparative statics without changing the precision of the original public signal.

Second, we introduce scarce broadcast capacity. With no directional preferences and no attention constraint, the complete network is the accuracy first best: the platform simply makes every signal available to everyone. When only a limited number of sources can receive broadcast reach, source selection becomes meaningful. In the symmetric large-$N$ benchmark, accuracy depends on the amplified set through total broadcast precision, so in the monotone region the platform allocates scarce reach to the most informative sources. This benchmark is related to key-player and targeting problems in networks \citep{ballester2006,sakovicssteiner2012,galeotti2020} and to information use by networked agents \citep{golubmorris2017,myattwallace2019,leister2019}. More recent work studies strategic targeting when influence depends on network position and agents' initial attitudes. \citet{vohra2026} studies how strategic influencers choose whom to target when agents differ in network position and bias, while \citet{comola2026} examine competing influence through strategic targeting in networks. Our intervention is different: the platform allocates \emph{reach} by changing whose information becomes visible and, in the coupled model, who becomes socially salient.

Third, reintroducing directional preferences shows why dense exposure need not imply accurate aggregation. In a two-community benchmark with equal and opposite biases, two internally complete but segregated communities perform worse than an integrated complete network even though every user observes every signal produced within her own community. Integration improves accuracy twice: it expands the information set and weakens the directional effect of coordinating only with same-sign users. This distinction is closely related to the broader literature on selective exposure and algorithmic curation. Empirical work shows that social-media algorithms can shape exposure to ideologically congruent or counter-attitudinal content \citep{bakshy2015,levy2021,guess2023,nyhan2023}, while theoretical work studies how curation affects connectivity, filter bubbles, and polarization \citep{bermankatona2020}. Our mechanism is different: users remain Bayesian and signals are exogenous; polarization changes because amplification can alter the social-reference structure through which directional preferences affect actions.

Finally, we relax two assumptions behind direct network design. First, an accuracy-oriented platform may know which links would improve collective truth tracking without being able to impose them. We therefore let the platform recommend feed changes while users decide whether to accept them. Without directional preferences, users accept nearly all accuracy-improving recommendations in the baseline simulations. With directional preferences, however, they frequently reject informative cross-cutting links because the same links that improve collective accuracy can make the receiving user's coordination environment less attractive. Second, the platform itself may value engagement rather than truth tracking. We use average expected user utility as a reduced-form engagement objective and show that engagement and accuracy can rank the same amplification differently. An engagement-oriented platform may favor validating, same-sign connections even when they reduce collective accuracy. This connects our analysis to a growing literature that treats recommendation and matching rules as economic design instruments. \citet{qianjain2024} study how recommendation systems affect content creation, \citet{peitzsobolev2026} show that personalized product recommendations can be welfare reducing, \citet{xuyang2026} study recommendations jointly with dynamic pricing and consumer feedback, and \citet{condorelliszentes2026} study how platform matching rules shape surplus. Relatedly, \citet{filippas2025} jointly model content production and consumption on social media, while \citet{sideriusdebo2026} study an upstream regulatory intervention that constrains the computational resources available to recommendation systems in a model of social-media addiction. Our instrument is different: the platform reallocates reach across information sources, and a feed change may also change the social-reference structure through which users coordinate. Separating information exposure from coordination salience then clarifies the mechanism: in the model, purely informational rewiring can improve learning, but the polarization channel operates through changes in social-reference relationships.

The paper contributes to three literatures. First, it contributes to network games and targeting by treating the observation network itself as the intervention and deriving source-specific statistics for the value of reach \citep{ballester2006,jackson2008,galeotti2020,vohra2026,comola2026}. Second, it connects network design to information aggregation and beauty-contest coordination \citep{morrisshin2002,golubmorris2017,myattwallace2019}. Third, it contributes to the economics of social media and information intermediaries. Existing work studies misinformation diffusion and strategic information transmission \citep{allcottgentzkow2017,vosoughi2018,mostagir2022,guriev2023,acemoglu2024,crawfordsobel1982,kamenicagentzkow2011,mullainathanshleifer2005,edmond2013}, algorithmic curation and recommendation design \citep{bermankatona2020,qianjain2024,peitzsobolev2026,xuyang2026,condorelliszentes2026,sideriusdebo2026}, and endogenous content production \citep{filippas2025,kanikhakobyan2026}. Here, signals and their precision are exogenous, and users do not strategically misreport them. The platform's instrument is the allocation of reach: it changes who sees whom. This also distinguishes the mechanism from social-image models of public expression such as \citet{braghieri2024}. The closest platform mechanism is \citet{acemoglu2024}. Their analysis shows how engagement incentives can distort information transmission by encouraging misinformation sharing. Our paper is complementary: even when signals are truthful and users update according to Bayes' rule, amplification can reduce accuracy because it changes who becomes influential for coordination. Together, the two mechanisms show that engagement-oriented platform design can generate distortions both through the information that is transmitted and through the network influence created by exposure.

The rest of the paper proceeds as follows. Section~\ref{sec:model} introduces the model and establishes the equilibrium separation result. Section~\ref{sec:svi} studies accuracy-oriented network design, beginning with broadcast benchmarks and then deriving the general finite-network information--bias decomposition, source-ranking criterion, and demagogue threshold. Section~\ref{sec:implementation} relaxes direct platform control and studies implementation under user choice. Section~\ref{sec:engagement} replaces the accuracy objective with engagement and separates information exposure from coordination salience. Section~\ref{sec:discussion} concludes.

\section{Model}\label{sec:model}

There are $N<\infty$ users on a social-media platform, indexed by
$i=1,\ldots,N$. The platform determines whose content appears in whose feed.
A feed link can matter for two reasons. First, it gives the receiving user an
additional source of information. Second, in the baseline model, appearing in a
user's feed can also make the source more important for that user's coordination
decision. We refer to these as the \emph{information channel} and the
\emph{social-reference channel}, respectively.

The feed network is directed and represented by an adjacency matrix
$M=(M_{ij})_{i,j=1}^N$, where $M_{ij}\in\{0,1\}$ and $M_{ii}=0$.
The entry $M_{ij}=1$ means that user $i$ receives the private signal, or content,
of user $j$. Thus, links point from the user who receives information to the
source who provides it. Let
$\Ni(M)=\{j\neq i:M_{ij}=1\}$
denote the set of sources appearing in user $i$'s feed.

This direction of links gives a simple interpretation of amplification. For any
source $j$, $\sum_{i=1}^N M_{ij}$ is the number of users whose feeds contain
$j$'s signal. We refer to this number as source $j$'s \emph{reach}. Adding a
link $i\to j$ increases the reach of source $j$ by one user. At the extreme, if
every other user observes $j$, then $j$'s signal is broadcast to the entire
population. The platform's design problem can therefore be viewed as a problem
of allocating reach across sources.

Users are uncertain about an underlying state $\theta\in\mathbb R$. Depending
on the application, $\theta$ may represent the correct assessment of a news
event, the quality of a policy, or the value of an asset.\footnote{Users share a diffuse
prior over $\theta$, which we interpret as the limit of a Gaussian prior whose
precision converges to zero. This assumption allows the analysis to focus on
the information contained in observed signals rather than on a common
informative prior. Appendix~\ref{app:exist} provides the formal limiting
argument.}

Everyone observes a public signal
\begin{equation}\label{eq:public-signal}
y=\theta+\eta,\qquad \eta\sim\mathcal N(0,\alpha^{-1}).
\end{equation}
The parameter $\alpha>0$ is the precision of public information: a larger $\alpha$ means that the public signal is more informative about $\theta$.

Each user $i$ also has a private signal
\begin{equation}\label{eq:private-signal}
x_i=\theta+\varepsilon_i,\qquad \varepsilon_i\sim\mathcal N(0,\beta_i^{-1}),
\end{equation}
where $\beta_i>0$ is user $i$'s private-signal precision. A larger $\beta_i$ therefore means that user $i$ is a more informative source. All signal noises are mutually independent.

The network determines which private signals a user can additionally observe. User $i$ sees the public signal $y$, her own private signal $x_i$, and the private signals $\{x_j:j\in\Ni(M)\}$ of the sources in her feed. We use $\E_i[\cdot]$ as shorthand for expectations conditional on exactly these observations. Under the diffuse prior,
\begin{equation}\label{eq:posterior}
\E_i[\theta]
=
\frac{\alpha y+\beta_i x_i+\sum_{j\in\Ni(M)}\beta_jx_j}
{\alpha+\beta_i+\sum_{j\in\Ni(M)}\beta_j}.
\end{equation}
Equation~\eqref{eq:posterior} shows the direct informational value of reach. Adding a source $j$ to user $i$'s feed gives $i$ one additional signal, and that signal is more valuable when $\beta_j$ is larger. The platform does not choose signal realizations or signal precision; it changes only who observes whom.

\paragraph{Preferences and coordination}\label{sec:model-preferences}

Each user chooses an action $a_i\in\mathbb R$. The action can be interpreted as a forecast, judgment, position, or other decision that should track the underlying state. Users care about two things: choosing an action close to their own preferred assessment and coordinating with other users.

User $i$ has a signed directional preference $b_i\in\mathbb R$. Her ideal action, if the state were known and coordination did not matter, would be $\theta+b_i$. Positive and negative values of $b_i$ represent preferences in opposite directions, while $b_i=0$ describes an unbiased user. Importantly, $b_i$ affects preferences rather than beliefs: users process all observed signals according to Bayes' rule. In the two-camp exercises below, we specialize to
\[
b_i\in\{-b,+b\},\qquad b>0,
\]
so that the two groups have equally strong but oppositely directed preferences.

Given an action profile $a=(a_1,\ldots,a_N)$, user $i$'s payoff is
\begin{equation}\label{eq:utility}
u_i(a,\theta)
=
-(1-r_i)\big(a_i-(\theta+b_i)\big)^2
-r_i\sum_{j\neq i}\phi_{ij}(M)(a_j-a_i)^2.
\end{equation}
The first term captures the user's desire to choose an action close to her own ideal $\theta+b_i$. The second is a beauty-contest motive: the user also dislikes choosing an action that differs from the actions of socially relevant others \citep{keynes1936,morrisshin2002}. The parameter $r_i\in(0,1)$ determines the strength of this coordination motive. When $r_i$ is small, the user's own assessment is relatively important. As $r_i$ rises, matching the actions of others becomes more important.

The weights $\phi_{ij}(M)$ determine \emph{whose} actions user $i$ wants to coordinate with. We allow users appearing in the feed to be more or less socially salient than users outside the feed. The parameter $q_i\in[0,1]$ is the total coordination weight that user $i$ places on sources in her feed. The remaining weight, $1-q_i$, is placed on users outside the feed. Within each group, weight is distributed uniformly:
\begin{equation}\label{eq:phi}
\phi_{ij}(M)=
\begin{cases}
q_i/|\Ni(M)|, & j\in\Ni(M),\\[2mm]
(1-q_i)/(N-1-|\Ni(M)|), & j\notin\Ni(M),\ j\neq i.
\end{cases}
\end{equation}
Thus $q_i$ measures the \emph{social salience of the feed}. A high $q_i$ means that users appearing in $i$'s feed receive disproportionate weight in her coordination decision. If both groups are nonempty, the value
$q_i=\frac{|\Ni(M)|}{N-1}
$
makes all other users equally important; values above this benchmark overweight feed sources, while values below it overweight users outside the feed. If one of the two groups is empty, we drop the empty component and assign equal weight $1/(N-1)$ to the nonempty group.

The roles of $\beta_j$ and $q_i$ are therefore distinct. The parameter $\beta_j$ measures how informative source $j$ is. The parameter $q_i$ measures how much social weight user $i$ places on sources appearing in her feed. Neither parameter changes the other: $q_i$ does not distort Bayesian learning, and $\beta_j$ does not directly determine social salience.

This distinction is central to the paper. In the baseline, the same feed network determines both information exposure and social reference groups. Adding a link $i\to j$ therefore does two things at once: it gives user $i$ access to $j$'s signal and can make $j$ more important for $i$'s coordination decision. A precise source can improve learning through the first channel, while a directionally biased source can move actions away from the state through the second. Section~\ref{sec:svi} first isolates the information channel using uniform coordination weights. The later polarization results use the coupled benchmark, and Section~\ref{sec:decouple} separates the two channels explicitly. The primitives $(r_i,q_i,b_i,\beta_i)$ are common knowledge.

\paragraph{Timing and equilibrium}\label{sec:equilibrium}

Fix a network $M$, which is observed by all users. The stage game proceeds as follows:
\begin{enumerate}[label=(\roman*)]
\item the state $\theta$, the public signal $y$, and the private signals $(x_1,\ldots,x_N)$ are realized;
\item each user observes the signals available through her feed;
\item users simultaneously choose actions $a_i\in\mathbb R$;
\item payoffs are realized according to \eqref{eq:utility}.
\end{enumerate}

The game is static. The index $t$ introduced later in Section~\ref{sec:implementation} labels rounds of the platform's recommendation procedure; it is not economic time in a repeated game. Each round simply evaluates a new network using a fresh instance of the static game above.

Given her information, user $i$ chooses $a_i$ to maximize expected utility. The first-order condition gives
\begin{equation}\label{eq:bestreply}
a_i
=
(1-r_i)\big(\E_i[\theta]+b_i\big)
+r_i\sum_{j\neq i}\phi_{ij}(M)\E_i[a_j].
\end{equation}
The first term is user $i$'s posterior ideal action. The second is her forecast of the actions of the users with whom she wants to coordinate. Equation~\eqref{eq:bestreply} therefore summarizes the economic tension in the model: users want to follow information, but they also want to coordinate.

The equation also makes clear why the platform can matter. Changing the network can change $\E_i[\theta]$ by changing the signals user $i$ observes. In the coupled model it can also change the coordination weights $\phi_{ij}(M)$, and hence the higher-order expectations $\E_i[a_j]$. If $r_i=0$, only the information channel remains and user $i$ chooses $\E_i[\theta]+b_i$. If in addition $b_i=0$, she simply chooses the Bayesian posterior mean.

The diffuse-prior formulation and the formal equilibrium strategy space are treated in Appendix~\ref{app:exist}.

The equilibrium is linear. We can write user $i$'s action as
\begin{equation}\label{eq:linear-strategy}
a_i
=
\sum_{k=1}^N c_{ik}x_k+c_{iy}y+d_i,
\end{equation}
where $c_{ik}=0$ whenever user $i$ does not observe $x_k$. The coefficients $c_{ik}$ and $c_{iy}$ describe how equilibrium actions respond to information. The constant $d_i$ captures the directional component of user $i$'s action generated by the bias parameters.

\begin{theorem}\label{thm:exist}
Fix any directed network $M$. If
$
\bar r=\max_i r_i<1,
$
 the game has a unique linear Bayes–Nash equilibrium.  The signal loadings $\{c_{ik},c_{iy}\}$ are independent of the bias parameters $\{b_i\}$. The directional components $\{d_i\}$ are independent of the signal precisions $(\alpha,\{\beta_i\})$.\footnote{The proof and the formal strategy space are given in Appendix~\ref{app:exist}.}
\end{theorem}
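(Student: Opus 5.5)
Under the linear ansatz \eqref{eq:linear-strategy}, we substitute into the best reply \eqref{eq:bestreply} and reduce equilibrium to a system of linear equations in the coefficients. Write $W=(r_i\phi_{ij}(M))_{ij}$. Because $\phi_{ij}\ge0$ and $\sum_{j\neq i}\phi_{ij}=1$ by \eqref{eq:phi} (also in the degenerate case where one group is empty), each row of $W$ sums to $r_i\le\bar r<1$. Hence $\|W\|_\infty\le\bar r<1$ and $I-W$ is invertible, with $(I-W)^{-1}=\sum_{n\ge0}W^n$ convergent. This contraction fact drives both uniqueness claims.

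First we compute $\E_i[a_j]$ under the linear profile. By independence and the diffuse prior, for any source $k$ we have $\E_i[x_k]=x_k$ if $k\in\{i\}\cup\Ni(M)$ and $\E_i[x_k]=\E_i[\theta]$ otherwise. Also $\E_i[y]=y$, and $\E_i[\theta]$ is given by \eqref{eq:posterior}. So
\[
\E_i[a_j]=\sum_{k\in O_i}c_{jk}x_k+\Big(\sum_{k\notin O_i}c_{jk}\Big)\E_i[\theta]+c_{jy}y+d_j,
\]
where $O_i=\{i\}\cup\Ni(M)$. This expression is affine in $i$'s observations, with the constant entering only through $d_j$. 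Plugging into \eqref{eq:bestreply} and matching coefficients on $y$, on each observed $x_k$, and on the constant gives two blocks. The first is the \emph{constant block},
\[
d_i=(1-r_i)b_i+r_i\sum_{j\neq i}\phi_{ij}d_j,\qquad\text{so}\qquad d=(I-W)^{-1}\big((1-r_i)b_i\big)_i .
\]
This block contains no $\alpha$ or $\beta$, which proves the last claim and pins $d$ down uniquely. The second is the \emph{loading block}: for each $i$ and each $k\in O_i\cup\{y\}$,
\[
c_{ik}=(1-r_i)w_{ik}+r_i\sum_{j}\phi_{ij}\Big(c_{jk}\mathbf 1[k\in O_i]+w_{ik}\sum_{l\notin O_i}c_{jl}\Big),
\]
where $w_{ik}$ are the posterior weights from \eqref{eq:posterior}. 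We also impose $c_{ik}=0$ for $k\notin O_i$. No $b$ appears anywhere in this block, so its solution is independent of biases.

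The main obstacle is showing that the loading block has a unique solution, since it couples coefficients across sources and involves sums over unobserved signals. The plan is to stack the unknowns into the vector $c=(c_{ik},c_{iy})$ and write the system as $c=g+T c$, where $g$ collects the $(1-r_i)w_{ik}$ terms and $T$ is linear. We then bound $T$ in the norm $\|c\|=\max_i\sum_k|c_{ik}|+|c_{iy}|$ (the row $\ell_1$ norm). Row $i$ of $Tc$ is $r_i$ times a $\phi_{ij}$-average of terms that redistribute the $c_{j\cdot}$ row: observed entries are kept, and unobserved mass is reallocated by the weights $w_{ik}\ge0$, which sum to one. This reallocation cannot increase the $\ell_1$ mass, so $\|Tc\|\le\bar r\|c\|$. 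By the Banach fixed point theorem (equivalently, invertibility of $I-T$) there is a unique solution.

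The diffuse prior is handled as in Appendix~\ref{app:exist}. We take the Gaussian prior precision $\tau>0$, for which the same argument applies verbatim with $w_{ik}$ depending on $\tau$, and pass to $\tau\to0$. The solutions converge because $(I-T_\tau)^{-1}$ is uniformly bounded by $(1-\bar r)^{-1}$ and continuous in $\tau$. Finally, a best reply to a linear profile is linear, so the fixed point is a linear Bayes–Nash equilibrium. Uniqueness within the linear class follows from the uniqueness of both blocks.
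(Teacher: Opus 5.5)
Your proof is correct, but it takes a different route from the paper's.

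\textbf{The paper's route.} It first proves uniqueness over a larger strategy space: all translation-equivariant strategies with finite tracking error. It subtracts two best-reply equations and uses that conditional expectation is an $L^2$ contraction, which gives $\max_i\|\delta_i\|_2\le\bar r\max_i\|\delta_i\|_2$. Existence and linearity then come from the proper-prior best-reply operator $T_\tau$, a contraction that preserves affine profiles. The paper passes to $\tau\downarrow0$ using nonsingularity of the coefficient system, which it infers from the $L^2$ uniqueness step. Only at the end does it match coefficients to observe that the loading system contains no $b$ and the intercept system contains no precisions.

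\textbf{Your route.} You work entirely in coefficient space. The same modulus $\bar r$ appears, but through your explicit row-$\ell_1$ bound on the loading operator $T$. That bound uses only that the posterior weights $w_{ik}$ are nonnegative and sum to at most one.

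\textbf{What your route buys.} It is more elementary and self-contained. It proves directly that the loading system has a unique solution, which the paper obtains only indirectly. It also gives an explicit iteration and the bound $\|c\|\le\|g\|/(1-\bar r)$. The separation claims become immediate, because the two blocks are decoupled linear systems, each uniquely solvable. It also sidesteps the delicate $L^2$ analysis under an improper prior. In fact your $\tau\to0$ step is needed only to connect with the paper's limiting interpretation of the diffuse prior: since $\alpha>0$, the diffuse posterior is well defined and your contraction already holds at $\tau=0$.

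\textbf{What the paper's route buys.} It gives a stronger uniqueness statement: no nonlinear equilibrium exists in $\mathcal S(M)$. Your argument establishes uniqueness only among affine profiles, which is what the statement asserts.

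A cosmetic point: in your loading block, the indicator $\mathbf 1[k\in O_i]$ must equal one for $k=y$ as well.
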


Theorem~\ref{thm:exist} separates the two channels analytically. Information quality and network exposure determine the signal loadings. Directional preferences and the coordination structure determine the constants $d_i$. This allows us to study the informational benefit of amplification separately from the bias that amplification may transmit through social coordination.

The directional component has a particularly intuitive fixed-point representation. Matching the constant terms in \eqref{eq:bestreply} gives
\begin{equation}\label{eq:bias-scalar}
d_i=(1-r_i)b_i+r_i\sum_{j\neq i}\phi_{ij}(M)d_j.
\end{equation}
Thus $d_i$ is not generally equal to user $i$'s own bias $b_i$. It combines her own directional preference with the directional components of the users who matter for coordination. Bias can therefore travel indirectly through chains of social-reference relationships.

Let $R=\operatorname{diag}(r_i)$, let $\Phi(M)$ collect the weights $\phi_{ij}(M)$, and let $b=(b_1,\ldots,b_N)'$ and $d=(d_1,\ldots,d_N)'$. Equation~\eqref{eq:bias-scalar} can then be written compactly as
\begin{equation}\label{eq:bias-solution}
d=(I-R\Phi(M))^{-1}(I-R)b.
\end{equation}

Equation~\eqref{eq:bias-scalar} captures the main mechanism of the bias channel: coordination transmits users' directional preferences through the network of social references. Thus, a user's equilibrium directional component reflects not only her own bias but also the biases of users who matter for her coordination decision, including indirect effects transmitted through chains of reference relationships. Section~\ref{sec:bias} develops this network interpretation using Katz--Bonacich influence profiles.

\section{Accuracy-Oriented Platform}\label{sec:svi}

This section characterizes the accuracy consequences of allocating reach. We begin with tractable broadcast benchmarks that isolate the information channel and show how common reach changes transparency and source selection. We then reintroduce directional preferences and move to arbitrary finite networks, where amplification can also change coordination influence. The resulting information--bias decomposition provides the general criterion for evaluating network interventions. For any fixed network $M$, Theorem~\ref{thm:exist} determines users' equilibrium actions, and the platform compares networks according to how closely those actions track the true state $\theta$.

\paragraph{Platform objective}\label{sec:platform-benchmark}

The platform first chooses the network $M$. Users then observe their signals and choose actions according to the game described in Section~\ref{sec:model}. The platform evaluates the resulting network using
\begin{equation}\label{eq:accuracy}
W(M)
=
-\frac{1}{N}\sum_{i=1}^N
\mathbb E\big[(a_i-\theta)^2\big].
\end{equation}
A larger $W(M)$ means that, on average, users' actions are closer to the true state.  $W(M)$ is a \emph{truth-tracking accuracy index} and not a welfare measure. The platform does not value coordination itself or users' gains from moving toward their preferred actions. It cares only about how accurately equilibrium actions track $\theta$. Section~\ref{sec:engagement} later considers a platform that instead values users' expected utility. The platform can only change the network structure. In particular, it can change who observes whom (feed composition) and, therefore, how much reach each source receives. It cannot create new signals, change their realizations or precision, or choose users' actions directly.

In this section, we begin with benchmark environments in which the platform
directly chooses who observes whom. We first study the effect of broadcasting a
private signal to the population when $b_i=0$, and then ask which user's signal should be broadcast. We next reintroduce directional preferences $b_i \neq 0$ and return to
arbitrary finite networks. In this more general setting, the effect of a network
intervention on accuracy separates into an information gain and a change in bias
risk. Section~\ref{sec:implementation} then relaxes direct platform control in feed composition, and the 
platform recommends link changes, but users decide whether to accept or reject
them.

\paragraph{Signal amplification and the transparency reversal}\label{sec:reversal}

We first isolate the information effect by setting $b_i=0$
and using uniform coordination weights. Consider a \emph{global broadcast} in
which the platform takes one user's private signal and makes the same realization
visible to all other users. Although the signal is private in origin, the broadcast
makes it common in use. It can therefore serve as an additional coordination
device alongside the public signal.

To characterize this effect analytically, we begin with a symmetric large-$N$
benchmark in which all private signals have precision $\beta$. The result below
shows how introducing a broadcast signal changes the role of public information
in equilibrium. We then use Figure~\ref{fig:reversal} to illustrate the same
mechanism in finite networks and to distinguish the effect of common reach from
network centrality alone.

We begin with the no-network benchmark of \citet{morrisshin2002}. Each user
observes only her own private signal $x_i$ and the public signal $y$. In the
unique symmetric linear equilibrium, truth-tracking accuracy is
\begin{equation}\label{eq:WMS}
W_{\mathrm{MS}}(\alpha)
=
-\frac{(1-r)^2\beta+\alpha}
{\big((1-r)\beta+\alpha\big)^2},
\qquad
\frac{dW_{\mathrm{MS}}}{d\alpha}
\ \propto\
\alpha+\beta(1-r)(1-2r).
\end{equation}
When $r>1/2$ and public precision is sufficiently low, this derivative is
negative. Thus, making the public signal more precise can reduce truth-tracking
accuracy because users place too much strategic weight on the common signal as
a coordination device. This is the Morris–Shin transparency effect.

Now suppose the platform amplifies one source $c$ by broadcasting her private signal
$z:=x_c$ to all other users. Each ordinary user observes
$
\{x_i,z,y\}.
$
The broadcast signal is not public information in origin, but it is public-like in use: all
ordinary users observe the same realization. The corresponding accuracy expression is
\begin{equation}\label{eq:WS}
W_{\mathrm{S}}(\alpha)
=
-\frac{\big(1+(1-r)^2\big)\beta+\alpha}
{\big((2-r)\beta+\alpha\big)^2},
\qquad
\frac{dW_{\mathrm{S}}}{d\alpha}
\ \propto\
\alpha+\beta\big(2(1-r)^2+r\big).
\end{equation}
The derivative on the right of \eqref{eq:WS} is positive for every $r\in(0,1)$ and $\alpha>0$.
Broadcasting, therefore, changes the comparative statics of transparency.

\begin{theorem}\label{thm:reversal}
Consider the symmetric large-$N$ benchmark under the truth-tracking accuracy
criterion \eqref{eq:accuracy}.

\begin{enumerate}[label=(\roman*)]

\item \emph{Effective common precision.}
Broadcasting one private signal of precision $\beta$ is equivalent, for
truth-tracking accuracy, to increasing common precision in the no-network
benchmark from $\alpha$ to $\alpha+\beta$:
$
W_{\mathrm{S}}(\alpha)
=
W_{\mathrm{MS}}(\alpha+\beta).
$

\item \emph{Transparency reversal.}
In the no-network benchmark, when $r>\tfrac12$, accuracy decreases with public
precision whenever
$
\alpha<\alpha^\ast
:=
\beta(1-r)(2r-1).
$
By contrast, after one private signal is broadcast,
$\frac{dW_{\mathrm{S}}(\alpha)}{d\alpha}>0$
for every $r\in(0,1)$ and $\alpha>0$. Hence broadcast eliminates the region in
which greater public precision lowers truth-tracking accuracy.

Moreover, whenever $\alpha\ge\alpha^\ast$,
$
W_{\mathrm{S}}(\alpha)
=
W_{\mathrm{MS}}(\alpha+\beta)
>
W_{\mathrm{MS}}(\alpha),
$
broadcasting strictly improves truth-tracking accuracy.
\end{enumerate}
\end{theorem}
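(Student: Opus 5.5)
The plan is to reduce the broadcast environment to the no-network Morris--Shin game by a sufficient-statistic argument, and then obtain part (ii) by calculus on the closed form \eqref{eq:WMS}.

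\textbf{Step 1: reduce broadcast to Morris--Shin.} In the symmetric large-$N$ benchmark with $b_i=0$ and uniform weights $\phi_{ij}=1/(N-1)$, each ordinary user $i$ observes $\{x_i,z,y\}$. Under the diffuse prior, \eqref{eq:posterior} gives
\[
\E_i[\theta]=\frac{\beta x_i+(\alpha+\beta)\tilde y}{\alpha+2\beta},
\qquad
\tilde y:=\frac{\alpha y+\beta z}{\alpha+\beta}.
\]
Here $\tilde y$ is common to all ordinary users, and $\tilde y=\theta+\tilde\eta$ with $\tilde\eta$ independent of every $\varepsilon_i$ for $i\neq c$. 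Its precision is $\alpha+\beta$.

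By Theorem~\ref{thm:exist}, the linear equilibrium is unique, so I can restrict attention to it. I will show that in this equilibrium the common signals enter every ordinary user's action only through $\tilde y$. The argument is to guess $a_i=\kappa x_i+(1-\kappa)\tilde y$ for all ordinary $i$. In the large-$N$ limit, two things hold:
\begin{itemize}
\item the broadcaster $c$ has weight $1/(N-1)\to 0$ in everyone's coordination term and in the average defining $W$;
\item the cross-sectional average of idiosyncratic noises vanishes, so $\frac{1}{N-1}\sum_{j\neq i}\E_i[a_j]\to \kappa\E_i[\theta]+(1-\kappa)\tilde y$.
\end{itemize}
Substituting into \eqref{eq:bestreply} then yields exactly the Morris--Shin fixed point with public signal $\tilde y$ of precision $\alpha+\beta$. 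That fixed point is $\kappa=(1-r)\beta/((1-r)\beta+\alpha+\beta)$, and uniqueness confirms the guess.

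\textbf{Step 2: evaluate accuracy.} Accuracy is then $\E[(a_i-\theta)^2]=\kappa^2/\beta+(1-\kappa)^2/(\alpha+\beta)$. This is precisely the Morris--Shin expression evaluated at $\alpha+\beta$, which proves (i), $W_{\mathrm S}(\alpha)=W_{\mathrm{MS}}(\alpha+\beta)$. It also reproduces \eqref{eq:WS}.

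\textbf{Step 3: sign the derivatives.} Write $W_{\mathrm{MS}}(\alpha)=-(A+\alpha)/(B+\alpha)^2$ with $A=(1-r)^2\beta$ and $B=(1-r)\beta$. Then
\[
W_{\mathrm{MS}}'(\alpha)=\frac{\alpha+2A-B}{(B+\alpha)^3}.
\]
The denominator is positive, and $2A-B=\beta(1-r)(1-2r)$. So for $r>1/2$ the derivative is negative exactly when $\alpha<\alpha^\ast=\beta(1-r)(2r-1)$.

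For the broadcast case, (i) gives $W_{\mathrm S}'(\alpha)=W_{\mathrm{MS}}'(\alpha+\beta)$. Its sign is that of
\[
\alpha+\beta+\beta(1-r)(1-2r)=\alpha+\beta\bigl(2(1-r)^2+r\bigr),
\]
which is strictly positive for every $r\in(0,1)$ and $\alpha>0$.

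\textbf{Step 4: the final claim.} When $\alpha\ge\alpha^\ast$, $W_{\mathrm{MS}}'>0$ on $(\alpha^\ast,\infty)$, so $W_{\mathrm{MS}}$ is strictly increasing on $[\alpha,\alpha+\beta]$. Hence $W_{\mathrm{MS}}(\alpha+\beta)>W_{\mathrm{MS}}(\alpha)$, which combined with (i) gives the strict improvement. This also covers the boundary case $\alpha=\alpha^\ast$.

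\textbf{Main obstacle.} The delicate step is making the large-$N$ reduction in Step 1 precise. The broadcaster observes no $z$-duplicate and has different information, and the averages include finitely many off-pattern terms. The plan is to solve the finite-$N$ linear system exactly, which Theorem~\ref{thm:exist} guarantees is uniquely solvable. I will then show that the broadcaster's loadings enter others' best replies with weight $O(1/N)$, and pass to the limit in the coefficients and in $W$. The sufficiency of $\tilde y$ is exact at every $N$ for the ordinary users' posteriors; only the coordination term requires the limit.
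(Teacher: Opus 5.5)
\paragraph{Assessment.} Your proof is correct. Part (ii) essentially matches the paper. The paper signs $dW_{\mathrm S}/d\alpha$ by noting that $\alpha+\beta>\beta>\alpha^\ast$ because $\alpha^\ast\le\beta/8$, whereas you sign $\alpha+\beta\bigl(2(1-r)^2+r\bigr)$ directly. You also spell out the monotonicity argument behind the level comparison, which the paper calls immediate.

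\paragraph{Where the routes differ: part (i).} The paper solves the broadcast game by matching coefficients on three weights $(w_x,w_z,w_y)$ for $(x_i,z,y)$ and derives \eqref{eq:WS}. It then checks $W_{\mathrm S}(\alpha)=W_{\mathrm{MS}}(\alpha+\beta)$ by substituting $\alpha+\beta$ into \eqref{eq:WMS}. The equivalence therefore appears as an algebraic identity verified after the fact. You instead pool the two common signals into $\tilde y$ of precision $\alpha+\beta$ and verify that equilibrium actions depend on $(z,y)$ only through $\tilde y$. This makes the equivalence structural: the broadcast game is literally the Morris--Shin game with a more precise public signal. So (i) needs no algebra, and the reason it holds is transparent.

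\paragraph{What each route buys.} Your reduction also yields Theorem~\ref{thm:optimal} at once. Pooling $m$ broadcast signals gives $W(H)=W_{\mathrm{MS}}(\alpha+T_H)$, which is exactly \eqref{eq:Wamp}. The paper's coefficient-matching route produces the separate weights $w_z=\beta/D$ and $w_y=\alpha/D$, with $D=(2-r)\beta+\alpha$, directly. It uses these for the crossing at $\alpha=\beta$ in Figure~\ref{fig:reversal}. In your approach they follow by unpacking $(1-\kappa)\tilde y$.

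\paragraph{Minor remarks.}
\begin{enumerate}
\item The bullet about cross-sectional averaging of noises is unnecessary. For each ordinary $j\neq i$, $\E_i[x_j]=\E_i[\theta]$ exactly, so $\E_i[a_j]=\kappa\E_i[\theta]+(1-\kappa)\tilde y$ holds term by term. The only limiting step is dropping the broadcaster's weight $1/(N-1)$.
\item The uniqueness you need is that of the limiting representative system, not the finite-$N$ uniqueness of Theorem~\ref{thm:exist}. It is immediate because that system is triangular. The $x_i$-equation pins down $w_x$, since its slope is $r\beta/(\alpha+2\beta)<1$. Then $w_z$ and $w_y$ solve scalar equations with slope $r<1$.
\item The paper never carries out the finite-$N$ limit you flag as the main obstacle. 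It works directly with the representative-user fixed point, as your Step 1 does. That plan would add rigor beyond the paper's proof rather than repair a gap in yours.
\end{enumerate}
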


Part~(i) of Theorem~\ref{thm:reversal} shows that broadcasting a private signal does not make the original public signal more precise. Instead, it creates an additional signal that is observed by everyone. In the symmetric benchmark, the precision $\beta$ of the broadcast signal therefore enters the accuracy expression exactly as additional common precision.

Part~(ii) shows why this matters for transparency. Without broadcast, the public signal is the only common source of information and can receive excessive strategic weight when coordination motives are strong. Broadcasting introduces a second common signal and moves the economy beyond the region in which greater public precision reduces accuracy. Since $\alpha^\ast\leq\beta/8<\beta$, broadcasting a single private signal is sufficient to cross this threshold. The Morris--Shin non-monotonicity is therefore eliminated because users are no longer forced to coordinate around a single common source of information.

Figure~\ref{fig:reversal} illustrates Theorem~\ref{thm:reversal} in a finite-$N$ environment. Panel~(a) compares broadcast with alternative non-broadcast central positions and shows that the monotonicity result is driven by common reach rather than by centrality alone. Panel~(b) illustrates how information is transmitted under broadcast by plotting the equilibrium weights that a non-central user places on the public signal $y$ and the broadcast signal $z$. The two weights cross at $\alpha=\beta$, showing how users shift toward the relatively more precise source of common information.

\begin{figure}[h!]
\centering
\includegraphics[width=\textwidth]{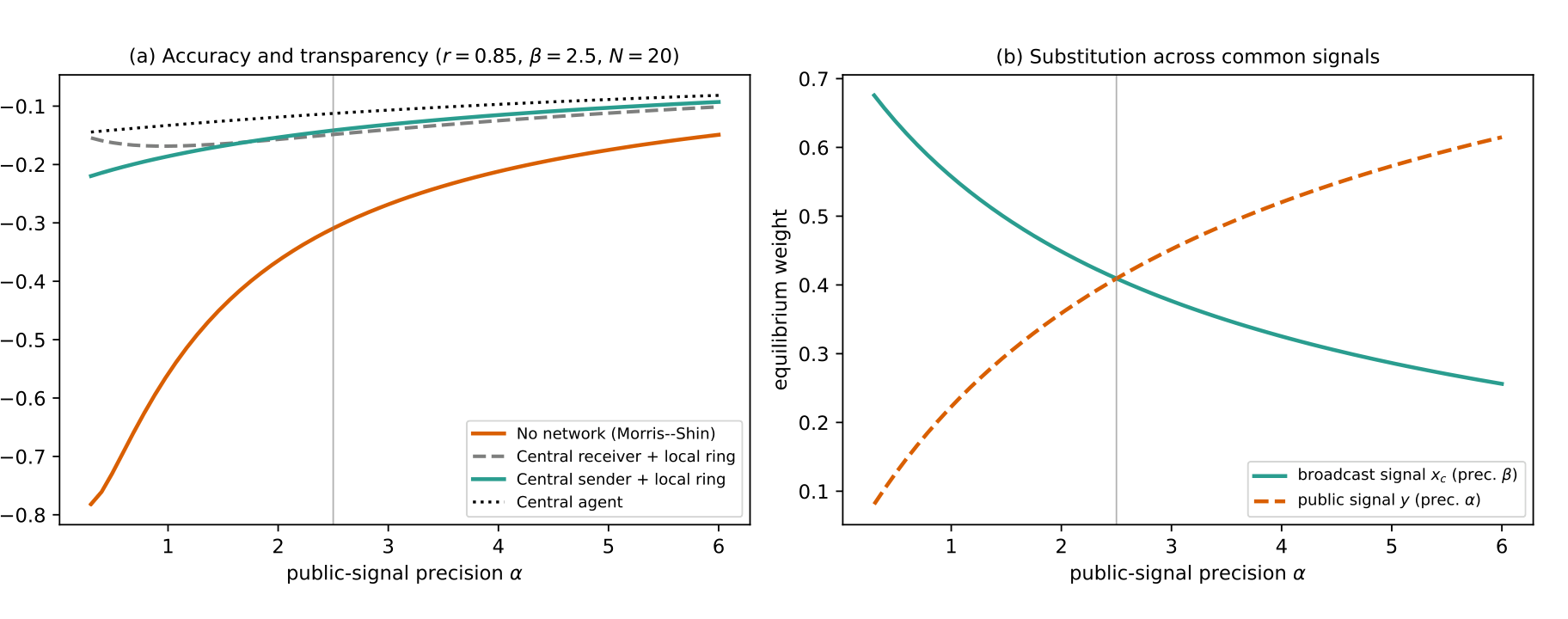}
\caption{Accuracy and equilibrium weights under transparency. \emph{Left:} equilibrium accuracy
as public precision $\alpha$ varies for the no-network benchmark, the central receiver, the
central sender, and the central user ($r=0.85$, $\beta=2.5$, $N=20$). Over the plotted range the
no-network benchmark is increasing; its Morris--Shin downward-sloping region lies at the very low
precisions $\alpha<\beta(1-r)(2r-1)=0.2625$, to the left of the displayed window. The
non-broadcast central-receiver topology retains a visible dip, whereas the broadcast (central
sender and central user) topologies are monotone throughout. \emph{Right:} for a
non-central user, the weight on the broadcast signal $x_c$ and the weight on the public signal
$y$ cross at $\alpha=\beta$, showing the substitution mechanism. All curves are exact finite-$N$
equilibrium objects; the central topologies use the same local follower-ring baseline.}
\label{fig:reversal}
\end{figure}

\paragraph{Optimal amplification under attention constraints.}\label{sec:optimal}

Broadcasting informative signals can improve accuracy, but a platform typically
cannot give every source unlimited reach. We therefore ask which sources should
be amplified when broadcast capacity is limited. We continue to set $b_i=0$ so
that source selection depends only on information quality.
We begin with the unconstrained benchmark.

\begin{lemma}[Bias-free complete-network benchmark]\label{lem:complete}
Suppose $b_i=0$ for all $i$. The complete network maximizes truth-tracking
accuracy. When every user observes every signal, all users choose the
full-information posterior mean; coordination loss is zero, and
\[
W=-\frac{1}{\alpha+\sum_i\beta_i},
\]
which is the highest attainable level of truth-tracking accuracy.
\end{lemma}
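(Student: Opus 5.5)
The argument has two parts: an upper bound on $W(M)$ valid for every network, and a check that the complete network attains it.

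For the upper bound, fix any network $M$ and let $\mathcal F$ be the $\sigma$-algebra generated by all signals $(y,x_1,\ldots,x_N)$. By Theorem~\ref{thm:exist} the equilibrium is linear, and by \eqref{eq:linear-strategy} each action $a_i$ is an affine function of observed signals, so it is $\mathcal F$-measurable. Since $b_i=0$, the constants satisfy $d=(I-R\Phi(M))^{-1}(I-R)\cdot 0=0$ by \eqref{eq:bias-solution}. For any $\mathcal F$-measurable $a_i$ with finite second moment, the orthogonality property of conditional expectation gives
\[
\E[(a_i-\theta)^2]=\E[(a_i-\E[\theta\mid\mathcal F])^2]+\E[(\theta-\E[\theta\mid\mathcal F])^2]\ \ge\ \Var(\theta\mid\mathcal F).
\]
Under the diffuse prior, the full-information posterior is Gaussian with precision $\alpha+\sum_k\beta_k$. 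Hence $\Var(\theta\mid\mathcal F)=1/(\alpha+\sum_k\beta_k)$. Averaging over $i$ then yields $W(M)\le -1/(\alpha+\sum_k\beta_k)$ for every $M$. The diffuse prior should be handled as in Appendix~\ref{app:exist}. Since actions are affine in the signals, it is enough to work with the error $a_i-\theta$ expressed through the noise terms. Unbiasedness, meaning that the coefficients on the signals sum to one, is needed for a finite limit. Among such combinations, the variance is minimized by precision weighting. This is the Gauss--Markov form of the same bound and avoids conditioning on an improper prior.

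For attainment, take the complete network, so that each user observes every signal. Then \eqref{eq:posterior} gives the same posterior mean $\bar\theta:=\E_i[\theta]=(\alpha y+\sum_k\beta_kx_k)/(\alpha+\sum_k\beta_k)$ for all $i$. We guess $a_i=\bar\theta$ for all $i$. Since $\bar\theta$ is common knowledge among users, $\E_i[a_j]=\bar\theta$. The right-hand side of \eqref{eq:bestreply} is then $(1-r_i)\bar\theta+r_i\sum_{j\ne i}\phi_{ij}\bar\theta=\bar\theta$, because the weights $\phi_{ij}$ sum to one. Under full connectivity the out-of-feed group is empty, so the convention after \eqref{eq:phi} assigns weight $1/(N-1)$ to each other user. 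The guess is therefore an equilibrium, and by the uniqueness part of Theorem~\ref{thm:exist} it is the equilibrium. All actions coincide, so the coordination loss is zero. The error of each action is $\bar\theta-\theta$, whose variance is exactly $1/(\alpha+\sum_k\beta_k)$, so the bound from the first step is attained.

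The main obstacle is making the first step rigorous under the improper prior. The cleanest route is to show that any affine estimator with nonzero bias has infinite mean squared error in the diffuse limit, and to restrict attention to unbiased linear estimators. Among these, Gauss--Markov shows that precision weighting of all available signals minimizes the error variance. The other steps are direct verifications.
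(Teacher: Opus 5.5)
Your proposal is correct and follows essentially the same argument as the paper: no network gives any user more information than the full signal collection, whose precision-weighted mean bounds every user's tracking error below by $1/(\alpha+\sum_k\beta_k)$, and the complete network attains this bound. Two of your steps are more careful than the paper's proof: you check explicitly that $a_i=\bar\theta$ solves \eqref{eq:bestreply} under the $1/(N-1)$ convention and invoke uniqueness from Theorem~\ref{thm:exist}, and you handle the diffuse prior through the Gauss--Markov bound on unbiased affine estimators; the paper passes over both points.
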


 Lemma~\ref{lem:complete} shows why a constraint on reach is necessary for the
source-selection problem to be meaningful. If every signal can be shown to
everyone, the platform simply makes all information available. The interesting
question arises when only a limited number of sources can be given common reach.

To study this problem analytically, consider a symmetric large-$N$ benchmark.
Ordinary users have common private-signal precision $\beta$ and uniform
coordination weights. In addition, there is a finite set of candidate sources
$\mathcal C$, whose signal precisions $\{\beta_\ell\}_{\ell\in\mathcal C}$ may
differ. The platform can broadcast the signals of exactly $m$ candidates. Let
$H\subseteq\mathcal C$, with $|H|=m$, denote the set of amplified sources, and
define
$
T_H=\sum_{\ell\in H}\beta_\ell
$
as their total signal precision.

Signal precisions are exogenous: the platform chooses which sources to amplify,
not how informative those sources are. The candidate set remains finite as
$N\to\infty$, so its share of the population vanishes. Consequently, the
environment of a representative ordinary user depends on the amplified set
$H$ only through the information contained in the broadcast signals.

\begin{theorem}\label{thm:optimal}
In the symmetric large-$N$ broadcast benchmark, truth-tracking accuracy under
an amplified set $H$ is
\begin{equation}\label{eq:Wamp}
W(H)
=
-\frac{(1-r)^2\beta+T_H+\alpha}
{\big((1-r)\beta+T_H+\alpha\big)^2}.
\end{equation}
Hence accuracy depends on the identity of the amplified sources only through
their total precision $T_H$. If $\alpha\geq\beta/8$, accuracy is nondecreasing
in $T_H$, and is strictly increasing except at the knife-edge case
$\alpha=\beta/8$, $T_H=0$, and $r=3/4$. Therefore, with an exact budget of
$m$ broadcast sources, the accuracy-oriented platform amplifies the $m$
highest-precision candidates in $\mathcal C$. In particular, when $m=1$, it
amplifies the most precise source.
\end{theorem}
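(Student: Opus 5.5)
The plan is to reduce the amplified-set environment to the Morris--Shin no-network benchmark with an enlarged common precision, and then read off the monotonicity and ranking claims from \eqref{eq:WMS}. Fix $H$ and write $A_H:=\alpha+T_H$. An ordinary user observes $x_i$, the public signal $y$, and the broadcast signals $z_\ell=x_\ell$ for $\ell\in H$. All of $\{y\}\cup\{z_\ell\}_{\ell\in H}$ are observed by every ordinary user. Define the precision-weighted common statistic
\[
Y_H=\frac{\alpha y+\sum_{\ell\in H}\beta_\ell z_\ell}{A_H},
\]
so that $Y_H=\theta+\xi$ with $\xi\sim\mathcal N(0,A_H^{-1})$, and $\xi$ is independent of the ordinary users' private noises. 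By \eqref{eq:posterior}, $\E_i[\theta]=(A_HY_H+\beta x_i)/(A_H+\beta)$, so the posterior depends on the common signals only through $Y_H$.

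\textbf{Equilibrium.} Because $|\mathcal C|$ is finite while $N\to\infty$, the candidates carry vanishing coordination weight under uniform $\phi_{ij}$. I would therefore treat the coordination target of a representative ordinary user as the average action of ordinary users. I would guess a symmetric linear strategy $a_i=\kappa x_i+(1-\kappa)Y_H$ and verify it in the best reply \eqref{eq:bestreply} with $b_i=0$. Using $\E_i[x_j]=\E_i[\theta]$ for $j\neq i$, the exact Morris--Shin computation goes through with $\alpha$ replaced by $A_H$. It gives $\kappa=(1-r)\beta/((1-r)\beta+A_H)$ and
\[
\E[(a_i-\theta)^2]=\frac{\kappa^2}{\beta}+\frac{(1-\kappa)^2}{A_H}.
\]
Substituting $\kappa$ yields \eqref{eq:Wamp}, that is, $W(H)=W_{\mathrm{MS}}(A_H)$; the average in \eqref{eq:accuracy} is asymptotically that of ordinary users. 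Uniqueness comes from Theorem~\ref{thm:exist} applied at each finite $N$. Hence the limit equilibrium must be this one, since no other linear profile can satisfy the limiting fixed point. This also shows that $H$ enters only through $T_H$.

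\textbf{Monotonicity.} Let $c=(1-r)^2\beta$ and $d=(1-r)\beta$, so that $W=-(c+A)/(d+A)^2$. Differentiating,
\[
\frac{dW}{dA}=\frac{A+2c-d}{(d+A)^3}=\frac{A+\beta(1-r)(1-2r)}{(d+A)^3}.
\]
The numerator is nonnegative iff $A\ge\beta(1-r)(2r-1)$. The function $(1-r)(2r-1)$ attains its maximum $1/8$ on $(0,1)$ at $r=3/4$, and only there. So if $\alpha\ge\beta/8$, then $A_H\ge\alpha\ge\beta/8\ge\beta(1-r)(2r-1)$, and $W$ is nondecreasing in $T_H$. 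Equality in the chain requires $T_H=0$, $\alpha=\beta/8$ and $r=3/4$, which is the stated knife-edge.

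\textbf{Ranking.} Since every $\beta_\ell>0$, maximizing $W(H)$ over $|H|=m$ is equivalent to maximizing $T_H$. A sum of $m$ positive numbers chosen from a finite list is maximized by taking the $m$ largest. In particular, $m=1$ selects $\arg\max_\ell\beta_\ell$.

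\textbf{Main obstacle.} I expect the delicate step to be the large-$N$ justification rather than the algebra. One has to show that the candidates' own actions, and their different information sets (a candidate sees $x_\ell$ as her own signal), have vanishing effect on ordinary users' best replies and on the accuracy average. One also has to show that the finite-$N$ unique equilibria converge to the symmetric profile above. I would handle this by bounding the candidates' total coordination weight by $|\mathcal C|\max_i\phi_{ij}=O(1/N)$. The finite-$N$ fixed-point operator has contraction modulus $\bar r<1$, so perturbations of order $O(1/N)$ in weights move the loadings by $O(1/N)$.
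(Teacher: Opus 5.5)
Your proposal is correct and follows essentially the paper's route: solve the representative ordinary user's linear equilibrium by coefficient matching, compute the mean-squared error, sign the derivative via $T_H+\alpha+\beta(1-r)(1-2r)$ using the minimum $-1/8$ of $(1-r)(1-2r)$ at $r=3/4$, and rank sources by $T_H$. Collapsing $y$ and the $z_\ell$ into the sufficient statistic $Y_H$ is a compact version of the paper's per-signal weights $w_{z_\ell}=\beta_\ell/D$ and $w_y=\alpha/D$, and it makes $W(H)=W_{\mathrm{MS}}(\alpha+T_H)$ explicit; your $O(1/N)$ contraction sketch for the candidates is more careful than the paper, which simply asserts that a representative ordinary user governs population accuracy (the ranking step rests on monotonicity of $W$ in $T_H$, not on $\beta_\ell>0$).
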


Theorem~\ref{thm:optimal} gives a simple rule for allocating scarce broadcast
reach. In the monotone region, accuracy depends on the amplified set only
through its total precision $T_H$. Thus, when the platform can broadcast exactly
$m$ sources, it should choose the $m$ most informative ones. Popularity or
existing network centrality plays no role in this benchmark.
This result is related to key-player and targeting problems in network games
\citep{ballester2006,galeotti2020} and to \citet{sakovicssteiner2012}, who study
which agents are most important for coordination. The intervention here is
different: the platform allocates \emph{reach} by deciding whose information should become commonly visible.

The precision ranking, however, relies on the bias-free broadcast benchmark.
Once directional preferences are reintroduced and the network determines both
information exposure and social-reference relationships, source precision alone
is no longer sufficient. Amplifying an informative source can improve learning
while simultaneously increasing the influence of directional biases. We now
turn to this more general trade-off.

\paragraph{Bias and the limits of amplification}\label{sec:bias}  We now reintroduce directional preferences and ask when an informative source can nevertheless reduce truth-tracking accuracy. Private signals remain statistically unbiased, $x_i=\theta+\varepsilon_i$. Directional bias enters only through users' preferred actions, $\theta+b_i$. Thus, a ``biased source'' in our model is not a source that sends distorted information, but a user whose preferred action is shifted away from the state and whose influence may spread through the coordination network.

\begin{proposition}\label{prop:decomp}
In any linear equilibrium, truth-tracking accuracy can be written as
\begin{equation}\label{eq:decomp}
W
=
-\frac{1}{N}\sum_{i=1}^N
\left(
\underbrace{\sum_k\frac{c_{ik}^2}{\beta_k}
+\frac{c_{iy}^2}{\alpha}}_{\text{variance channel }V_i}
+
\underbrace{d_i^2}_{\text{bias channel}}
\right).
\end{equation}
The signal loadings $\{c_{ik},c_{iy}\}$ are independent of the directional
preferences $\{b_i\}$. Hence, bias does not affect the variance channel, including
the transparency result of Theorem~\ref{thm:reversal}; it affects accuracy only
through the equilibrium directional components $\{d_i\}$.
\end{proposition}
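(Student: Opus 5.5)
The plan is to substitute the linear strategy \eqref{eq:linear-strategy} into $a_i-\theta$, using the signal definitions \eqref{eq:public-signal}--\eqref{eq:private-signal}. This gives
\[
a_i-\theta=\Big(\sum_k c_{ik}+c_{iy}-1\Big)\theta+\sum_k c_{ik}\varepsilon_k+c_{iy}\eta+d_i .
\]
The first step is to show that the loadings sum to one, $\sum_k c_{ik}+c_{iy}=1$. This is the unbiasedness property of equilibrium strategies under the diffuse prior. Without it, the $\theta$ term would carry infinite variance under the limiting prior and the expression would not reduce to \eqref{eq:decomp}.

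I would prove unit-sum by matching coefficients in the best reply \eqref{eq:bestreply}. Let $s_i=\sum_k c_{ik}+c_{iy}$ be the total loading of user $i$. Shifting all signals by a constant $\kappa$ shifts $\E_i[\theta]$ by exactly $\kappa$, since the weights in \eqref{eq:posterior} sum to one. Under the diffuse prior, $\E_i[x_k]=\E_i[\theta]$ for every $k$, so $\E_i[a_j]$ shifts by $s_j\kappa$ (plus the constant $d_j$). Matching the $\kappa$-coefficients gives
\[
s_i=(1-r_i)+r_i\sum_{j\ne i}\phi_{ij}(M)s_j,
\]
that is, $(I-R\Phi(M))s=(I-R)\mathbf 1$.

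Because $\Phi(M)$ is row-stochastic and $\bar r<1$, the matrix $I-R\Phi(M)$ is invertible, as already used in \eqref{eq:bias-solution} and Theorem~\ref{thm:exist}. The vector $s=\mathbf 1$ solves this system, so it is the unique solution. Alternatively, one could appeal directly to the characterization of the equilibrium in Appendix~\ref{app:exist}, where this coefficient-matching is carried out.

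With the $\theta$ term gone, I would take expectations of the square. The noises $\varepsilon_k$ and $\eta$ are mutually independent with mean zero and variances $\beta_k^{-1}$ and $\alpha^{-1}$. The $d_i$ are deterministic constants. Hence all cross terms vanish and
\[
\E[(a_i-\theta)^2]=\sum_k c_{ik}^2/\beta_k+c_{iy}^2/\alpha+d_i^2 .
\]
Averaging over $i$ and negating yields \eqref{eq:decomp}.

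The remaining claims follow from Theorem~\ref{thm:exist}. The loadings $\{c_{ik},c_{iy}\}$ do not depend on $\{b_i\}$, so $V_i$ is bias-free. The variance expressions underlying Theorem~\ref{thm:reversal} are computed with $b_i=0$, so they are therefore unchanged by the introduction of bias. Bias thus enters $W$ only through $d_i^2$, with $d$ given by \eqref{eq:bias-solution}.

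The main obstacle is the unit-sum step, since it is what makes the $\theta$ term vanish, and it relies on the diffuse-prior limit being handled correctly. I would rely on the limiting construction in Appendix~\ref{app:exist}, checking that the equilibrium loadings converge and that the $\theta$ coefficient tends to zero fast enough relative to the vanishing prior precision. This ensures its contribution to the mean squared error disappears in the limit.
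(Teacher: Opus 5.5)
Your proposal is correct and follows the paper's proof. The paper likewise substitutes the signals into the linear strategy and invokes the unit-sum identity $\sum_k c_{ik}+c_{iy}=1$, which Appendix~\ref{app:exist} derives from the same system $s_i=(1-r_i)+r_i\sum_{j\ne i}\phi_{ij}(M)s_j$ with unique solution $s=\mathbf 1$; it then uses independence of the mean-zero noises and cites Theorem~\ref{thm:exist} for the bias-independence of the loadings. The limiting-prior concern in your last paragraph is unnecessary, because your coefficient matching already gives unit sum exactly in the diffuse-prior equilibrium. One small correction there: observed signals shift by $\kappa$ directly and unobserved ones through $\E_i[\theta]$, so the claim $\E_i[x_k]=\E_i[\theta]$ should be restricted to unobserved $k$.
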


 Proposition~\ref{prop:decomp} makes the trade-off transparent. Network
amplification can improve accuracy by reducing information uncertainty, but it
can also decrease accuracy by strengthening the influence of directional
preferences. Before studying arbitrary network interventions, we illustrate this
second force with a simple two-community benchmark. Users are divided into two
equally sized groups with opposite directional preferences. We compare an
integrated network, in which users observe and coordinate with members of both
groups, with two internally complete but segregated communities.

\begin{lemma}\label{lem:polarized-communities}
Suppose $N=2n$ users are divided into two groups of equal size, with a common private-signal precision $\beta$, a common coordination motive $r\in(0,1)$, and biases $b_i\in\{-b,+b\}$, where $b>0$. Compare the following two networks.

\emph{(i) Integrated network.} Every user observes every private signal and coordinates uniformly with all other users.

\emph{(ii) Segregated communities.} Every user observes all private signals generated within her own group and none from the opposite group, and places all coordination weight on members of her own group.

In the integrated network, the equilibrium directional component is
\[
d_i^{I}=\frac{1-r}{1+r/(N-1)}\,b_i,
\]
and truth-tracking accuracy is
\[
W^{I}
=-\frac{1}{\alpha+N\beta}
-\left(\frac{1-r}{1+r/(N-1)}\right)^2 b^2.
\]
In the segregated network, $d_i^{S}=b_i$ and
\[
W^{S}
=-\frac{1}{\alpha+n\beta}-b^2.
\]
Hence $W^{I}>W^{S}$. Even though users in the segregated network observe every signal generated within their own community, integration improves accuracy both by expanding information and by weakening the directional effect of same-sign coordination.
\end{lemma}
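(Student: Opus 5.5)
By Proposition~\ref{prop:decomp}, accuracy splits into a variance channel and a bias channel, so I would compute each piece separately in the two networks. Theorem~\ref{thm:exist} gives uniqueness of the linear equilibrium for each fixed network. It therefore suffices to exhibit one linear profile that satisfies the best-reply condition \eqref{eq:bestreply}, with its signal loadings and its constants $d_i$ matching separately. I would then read off $V_i$ and $d_i$ from that profile.

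\emph{Variance channel.} In the integrated network every user observes $y$ and all $N$ private signals. Let $\hat\theta$ denote the common posterior mean from \eqref{eq:posterior}. I would guess that the signal part of every action equals $\hat\theta$. Each user then knows the signal parts of all others exactly, so $\E_i[a_j]$ contributes $\hat\theta$. The best reply then returns $(1-r)\hat\theta+r\hat\theta=\hat\theta$, which confirms the guess; this is the same argument as in Lemma~\ref{lem:complete}. Under the diffuse prior, $\Var(\hat\theta-\theta)=1/(\alpha+N\beta)$, so $V_i=1/(\alpha+N\beta)$.

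In the segregated network the argument is run within each group. Members of a group observe exactly the same set of signals, namely $y$ and the $n$ in-group private signals. They coordinate only with each other, since their feed is their own group and $q_i=1$. Hence each member's signal part is the group posterior mean, and $V_i=1/(\alpha+n\beta)$. The one point I would check carefully is that \eqref{eq:phi} indeed puts zero weight on outsiders here. It does: with $q_i=1$ the outside weight $1-q_i$ is zero, and in-group weights are $1/(n-1)$. The specification in (ii) matches this. So the cross-group forecasts $\E_i[a_j]$, which would otherwise be messy, never enter.

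\emph{Bias channel.} I would solve \eqref{eq:bias-scalar}. For the integrated network, $\phi_{ij}=1/(N-1)$, and I would try $d_i=\kappa b_i$. Because the groups have equal size and opposite biases, $\sum_j b_j=0$, so $\sum_{j\neq i}d_j=-\kappa b_i$. Substituting gives $\kappa=(1-r)-r\kappa/(N-1)$, that is, $\kappa=(1-r)/(1+r/(N-1))$. For the segregated network, every user's reference group shares her bias $b_i$. Trying $d_i=b_i$ gives $(1-r)b_i+r b_i=b_i$, so $d_i^{S}=b_i$. Uniqueness of these solutions follows from \eqref{eq:bias-solution}: $R\Phi$ has row sums $r<1$, so $I-R\Phi$ is invertible.

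\emph{Assembling and comparing.} Averaging $V_i+d_i^2$ over users gives the stated $W^{I}$ and $W^{S}$, using $b_i^2=b^2$. For the comparison, note $N\beta>n\beta$, so the integrated variance term is strictly smaller. Also $0<\kappa<1$, so the integrated bias term $\kappa^2b^2$ is strictly below $b^2$. Both channels favor integration, hence $W^{I}>W^{S}$.

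I expect the main obstacle to be the segregated signal-loading step, since that is where a careless guess could fail. Keeping the outsider weight at exactly zero, as \eqref{eq:phi} with $q_i=1$ guarantees, reduces that step to the complete-network argument applied within each group.
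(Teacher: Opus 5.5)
Your proposal is correct and follows essentially the same route as the paper. You use the common (within-group) posterior mean to get variance terms $1/(\alpha+N\beta)$ and $1/(\alpha+n\beta)$, solve the scalar bias fixed point via $\sum_j b_j=0$ (equivalent to the paper's symmetry $d_-^{I}=-d_+^{I}$) together with $d_s^{S}=sb$, and compare the two channels term by term. Your explicit check that the posterior mean satisfies the best reply, and your appeal to uniqueness from Theorem~\ref{thm:exist}, make rigorous a step the paper leaves implicit.
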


Lemma~\ref{lem:polarized-communities} provides a simple interpretation for
community-based platforms. A subreddit-like environment may expose users
densely to content produced within one community while leaving them relatively
separated from users with opposing views. In the model, even complete exposure
within such a community does not eliminate the bias channel: when socially
relevant users share the same directional preference, coordination reinforces
rather than offsets that preference. Dense within-community exposure is
therefore not the same as integrated exposure.\footnote{Financial-investment communities provide a useful example. During
the GameStop episode, investors on Reddit's r/WallStreetBets coordinated around
a common view and trading strategy, even though the resulting stock price was
determined by a much broader set of market participants
\citep{lucchini2022gamestop}. The example illustrates why dense exposure within
one investment community need not eliminate the value of cross-group exposure.
Investors may learn a great deal from other members of the same community, but
their outcomes still depend on the beliefs and actions of investors outside it.
Exposure to those investors can therefore improve truth tracking both by
providing information about the broader market and by making the relevant
coordination group more representative of the agents whose actions determine
market outcomes.}

Proposition~\ref{prop:decomp} captures the broader trade-off. Amplifying an
informative source can reduce uncertainty, but making that source a salient
coordination reference can also increase directional distortion. The platform
must therefore evaluate both the informational benefit of reach and its effect
on the bias channel. The two-community benchmark provides the simplest
illustration of this tension; the results that follow characterize it for
arbitrary finite networks.

To measure the bias effect in an arbitrary network, we use the influence operator from Theorem~\ref{thm:exist}. Define
\[
L(M):=(I-R\Phi(M))^{-1}(I-R),\qquad R=\operatorname{diag}(r_i),
\]
so that the equilibrium bias profile is $d(M)=L(M)b$, and let
\[
\mathcal Q(M):=\frac1N L(M)'L(M).
\]
The matrix $\mathcal Q(M)$ is symmetric and positive semidefinite. Write $L(M)=[\ell_1(M),\ldots,\ell_N(M)]$, where
\[
\ell_c(M)=L(M)e_c=(I-R\Phi(M))^{-1}(I-R)e_c
\]
is the influence profile generated by a unit bias at source $c$. Its diagonal statistic
\[
\mathcal B_c(M):=\mathcal Q_{cc}(M)=\frac1N\|\ell_c(M)\|^2
\]
is the source's \emph{bias-amplification multiplier}. For two distinct sources $c$ and $h$, define
\[
\mathcal C_{ch}(M):=\mathcal Q_{ch}(M)=\frac1N\ell_c(M)'\ell_h(M),
\]
their \emph{influence-overlap coefficient}. Finally, let $\bar V(M)=N^{-1}\sum_i V_i(M)$ denote the mean variance channel and, for an intervention $M_0\to M_1$, write
\[
\Delta V:=\bar V(M_0)-\bar V(M_1),\qquad \Delta\mathcal Q:=\mathcal Q(M_1)-\mathcal Q(M_0).
\]
Here $\Delta V$ is the information gain, and $\Delta\mathcal Q$ records how the intervention changes bias propagation.

\begin{proposition}\label{prop:general-score}
For every finite directed network $M$ and every bias vector $b\in\mathbb R^N$,
\begin{equation}\label{eq:general-bias-accuracy}
W(M)=-\bar V(M)-b'\mathcal Q(M)b.
\end{equation}
Consequently, for any intervention $M_0\to M_1$,
\begin{equation}\label{eq:general-score}
W(M_1)-W(M_0)
=
\underbrace{\Delta V}_{\text{information gain}}
-
\underbrace{b'\Delta\mathcal Q\,b}_{\text{change in bias risk}}.
\end{equation}
\end{proposition}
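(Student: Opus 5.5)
The plan is to start from the accuracy decomposition of Proposition~\ref{prop:decomp} and rewrite its bias channel in matrix form using the closed-form bias profile \eqref{eq:bias-solution}. Fix a finite directed network $M$ with $\bar r<1$. Theorem~\ref{thm:exist} gives a unique linear equilibrium. Proposition~\ref{prop:decomp} then gives
\[
W(M)=-\frac1N\sum_i V_i(M)-\frac1N\sum_i d_i^2 .
\]
By definition, the first term is $-\bar V(M)$. The remaining work is to identify $\frac1N\sum_i d_i^2=\frac1N d'd$ with $b'\mathcal Q(M)b$.

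For this I will use \eqref{eq:bias-solution}, namely $d=(I-R\Phi(M))^{-1}(I-R)b=L(M)b$. Then
\[
\tfrac1N d'd=\tfrac1N b'L(M)'L(M)b=b'\mathcal Q(M)b,
\]
which is \eqref{eq:general-bias-accuracy}. The step that needs care is justifying that $I-R\Phi(M)$ is invertible, so that $L(M)$ is well defined. I will argue this directly. Each row of $\Phi(M)$ is nonnegative and sums to one; this holds by \eqref{eq:phi}, including the convention for empty groups. Hence $\|R\Phi(M)\|_\infty\le\bar r<1$, and the Neumann series converges. This bound also delivers the uniqueness of $d$ already asserted in Theorem~\ref{thm:exist}.

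I also need the variance channel to be independent of $b$, so that $\bar V(M)$ really is a function of $M$ alone. Theorem~\ref{thm:exist} and Proposition~\ref{prop:decomp} supply this: the loadings do not depend on biases, and the bias term enters only through $d$. It is also worth recording that $\mathcal Q(M)$ is symmetric positive semidefinite because it has the Gram form $\frac1N L'L$. Its entries are exactly $\mathcal B_c$ on the diagonal and $\mathcal C_{ch}$ off the diagonal, which gives the overlap interpretation.

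The intervention formula \eqref{eq:general-score} is then pure subtraction. Apply \eqref{eq:general-bias-accuracy} at $M_1$ and at $M_0$ with the same bias vector $b$:
\[
W(M_1)-W(M_0)=\bigl(\bar V(M_0)-\bar V(M_1)\bigr)-b'\bigl(\mathcal Q(M_1)-\mathcal Q(M_0)\bigr)b=\Delta V-b'\Delta\mathcal Q\,b .
\]
The only conceptual obstacle is confirming that the decomposition of Proposition~\ref{prop:decomp} holds without cross terms between noise and the constant. This is immediate because $\mathbb E[a_i-\theta]=d_i$ under unbiased signals whose loadings sum to one, since best replies track $\theta$ one-for-one. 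I will take that as established by Proposition~\ref{prop:decomp}.
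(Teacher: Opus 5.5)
Your proposal is correct and matches the paper's proof. Both substitute $d(M)=L(M)b$ into the bias channel of Proposition~\ref{prop:decomp}, obtaining $\frac1N\|L(M)b\|^2=b'\mathcal Q(M)b$, and then difference the resulting identity across $M_0$ and $M_1$; your explicit bound $\|R\Phi(M)\|_\infty\le\bar r<1$ for invertibility is the same spectral-radius fact the paper relies on.
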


Next, we show how the network transforms individual directional preferences into aggregate
bias risk
\begin{equation}\label{eq:overlap-decomp}
b'\mathcal Q(M)b
=
\sum_{c=1}^N b_c^2\mathcal B_c(M)
+
2\sum_{1\le c<h\le N}b_cb_h\mathcal C_{ch}(M).
\end{equation}

The diagonal term
$\mathcal B_c(M)$ measures how strongly the bias of source $c$ propagates
through the network. The off-diagonal term $\mathcal C_{ch}(M)$ captures how the
influence of two different sources overlaps. When two sources have biases in
the same direction, their effects reinforce one another; when their biases have
opposite signs, their effects can partially offset.

Proposition~\ref{prop:general-score} therefore gives a simple test for
any network intervention. Moving from $M_0$ to $M_1$ improves truth-tracking
accuracy if and only if the informational benefit of the intervention is large
enough to compensate for the additional bias risk it creates:
$
\Delta V\geq b'\Delta\mathcal Q\,b.
$
Thus, more reach is not valuable simply because it brings additional
information. Whether an intervention is desirable also depends on whose
influence it strengthens and how that influence interacts with the directional
preferences already present in the network. Corollary~\ref{cor:robust} extends
this criterion to the case in which the platform does not know the exact bias
profile but only has a bound on its magnitude.

\begin{corollary}\label{cor:robust}
If $\Delta\mathcal Q\succeq0$ and $\Delta V\ge0$, the bias profiles for which the intervention
raises accuracy form the quadratic region
$\{b:b'\Delta\mathcal Qb\le\Delta V\}$; when
$\Delta\mathcal Q\succ0$ it is an ellipsoid for $\Delta V>0$ and degenerates to $\{0\}$ for
$\Delta V=0$. More generally, for a known bound $\|b\|\le B$, the intervention
raises accuracy for every admissible bias vector if and only if
\begin{equation}\label{eq:robust-score}
\Delta V\ge B^2\max\{\lambda_{\max}(\Delta\mathcal Q),0\}.
\end{equation}
If $\Delta\mathcal Q$ is indefinite, the intervention raises bias risk in some bias directions and
reduces it in others, so no scalar topology-free bias threshold exists without restricting the
composition of biases.
\end{corollary}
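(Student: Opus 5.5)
The plan is to treat every statement in Corollary~\ref{cor:robust} as a consequence of the exact identity \eqref{eq:general-score} in Proposition~\ref{prop:general-score}. The key point is that $\Delta V$ and $\Delta\mathcal Q$ depend only on the topologies $M_0,M_1$ and on the primitives $(r_i,q_i,\alpha,\beta_i)$, and not on $b$. This follows from Theorem~\ref{thm:exist}: the signal loadings, and hence $\bar V$, are independent of $b$, and $L(M)$ is defined without reference to $b$. So for fixed $M_0,M_1$, the accuracy change is a function of $b$ alone, namely $f(b)=\Delta V-b'\Delta\mathcal Q\,b$. The intervention raises (weakly) accuracy exactly when $f(b)\ge0$, i.e.\ when $b$ lies in $\{b:b'\Delta\mathcal Q b\le\Delta V\}$. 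This gives the first claim directly. It holds without any sign restriction; the hypotheses $\Delta\mathcal Q\succeq0$ and $\Delta V\ge0$ only ensure that the set is a nonempty convex sublevel set of a convex quadratic form containing $b=0$.

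For the geometric description, I will use the spectral decomposition $\Delta\mathcal Q=U\Lambda U'$. This is legitimate because $\Delta\mathcal Q$ is symmetric, being a difference of the symmetric matrices $\mathcal Q(M_1),\mathcal Q(M_0)$. Write $z=U'b$, so the region becomes $\sum_k\lambda_k z_k^2\le\Delta V$. If $\Delta\mathcal Q\succ0$, all $\lambda_k>0$. Then for $\Delta V>0$ the region is the ellipsoid with semi-axes $\sqrt{\Delta V/\lambda_k}$ along the eigenvectors, and for $\Delta V=0$ the inequality forces $z=0$, hence $b=0$. (If $\Delta\mathcal Q$ is only semidefinite, the set is an elliptic cylinder along $\ker\Delta\mathcal Q$; I would remark on this but the statement does not require it.)

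For the robust bound \eqref{eq:robust-score}, I need $\min_{\|b\|\le B} f(b)\ge0$, which is equivalent to $\Delta V\ge\max_{\|b\|\le B} b'\Delta\mathcal Q b$. By Rayleigh--Ritz, $b'\Delta\mathcal Q b\le\lambda_{\max}\|b\|^2$. If $\lambda_{\max}>0$, the maximum over the ball is $B^2\lambda_{\max}$, attained at $b=B u_{\max}$. If $\lambda_{\max}\le0$, the maximum is $0$, attained at $b=0$. Together the maximum is $B^2\max\{\lambda_{\max},0\}$, which gives both directions of the iff: sufficiency from the upper bound, and necessity from the attaining vector.

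The indefinite case is interpretive. I would exhibit eigenvectors $u_+,u_-$ with $\lambda_+>0>\lambda_-$. Along $b=tu_-$, bias risk strictly falls, so $f>0$ for all $t$ whenever $\Delta V\ge0$. Along $b=tu_+$, $f$ becomes negative for large $|t|$. Hence the sign of $f$ depends on the direction of $b$ and not only on $\|b\|$, so no condition of the form $\|b\|\le\bar B(M_0,M_1)$ characterizes the improving set. I expect the main obstacle to be stating this last claim precisely rather than proving any hard step. I would formalize it as follows: there is no scalar $\tau$ such that $f(b)\ge0\iff\|b\|\le\tau$, because the improving set is then unbounded along $u_-$ but bounded along $u_+$ (when $\Delta V>0$). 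Everything else is routine linear algebra.
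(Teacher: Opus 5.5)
Your proposal is correct and follows essentially the same route as the paper. Both read the improving set off the exact identity in Proposition~\ref{prop:general-score}, with the ellipsoid and $\{0\}$ cases following from positive definiteness. Both then use Rayleigh--Ritz, with the $\max\{\cdot,0\}$ correction because the ball contains $b=0$, and treat the indefinite case via eigenvectors of opposite-signed eigenvalues. Your version is more explicit: you note that $\Delta V$ and $\Delta\mathcal Q$ do not depend on $b$, exhibit the maximizing vector $Bu_{\max}$, and state the indefinite-case claim precisely as the nonexistence of a norm threshold $\tau$.
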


Corollary~\ref{cor:robust} shows how much bias an intervention can tolerate before
its informational benefit is overturned. When $\Delta\mathcal Q\succeq0$, the
intervention weakly increases bias risk in every direction. The set
$\{b:b'\Delta\mathcal Q b\leq\Delta V\}$ therefore describes exactly the bias
profiles for which the information gain is large enough to compensate for this
additional distortion. Bias profiles close to zero are safe, while sufficiently
large biases in directions that the intervention amplifies most strongly make
the intervention undesirable.

The second part provides a more conservative criterion when the platform does
not know the exact composition of biases. If it knows only that
$\|b\|\leq B$, the worst-case bias profile is aligned with the eigenvector
associated with the largest positive eigenvalue of $\Delta\mathcal Q$. Hence
$B^2\lambda_{\max}(\Delta\mathcal Q)$ is the largest bias cost the intervention
can generate within the admissible set. The intervention is robustly
accuracy-improving only when its information gain exceeds this worst-case cost.

When $\Delta\mathcal Q$ is indefinite, the effect of the same intervention
depends on the composition of biases in the network. Some bias profiles are
amplified, while others are partially offset. In that case, there is no single
scalar measure of ``how biased'' the population can be that determines whether
the intervention is desirable; the direction of biases and their interaction
with the network matter as well.

Corollary~\ref{cor:robust} provides a robustness criterion when the platform
does not know the exact composition of directional preferences. When the bias
profile $b$ is known, however, the platform can compare candidate sources
directly. The next corollary turns the general intervention criterion into a
source-selection rule.
\begin{corollary}\label{cor:source-ranking}
Fix a baseline network $M$ and let $M^{(c)}$ denote the network obtained by a
candidate one-source amplification of source $c$. Define
\begin{equation}\label{eq:source-score}
\mathcal S_c(M;b)
:=
\bar V(M)-\bar V(M^{(c)})
-
b'\big(\mathcal Q(M^{(c)})-\mathcal Q(M)\big)b.
\end{equation}
Among a set of mutually exclusive one-source interventions, an
accuracy-oriented platform chooses a source that maximizes
$\mathcal S_c(M;b)$. If all scores are strictly negative, no amplification is
optimal; if the largest score is zero, amplification and no intervention are
tied. In general, the score cannot be determined from source $c$'s own precision
and own bias alone. Through the off-diagonal terms in
\eqref{eq:overlap-decomp}, the ranking also depends on the biases and influence
profiles of other users in the network.
\end{corollary}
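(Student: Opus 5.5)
The plan is to read Corollary~\ref{cor:source-ranking} as a direct specialization of Proposition~\ref{prop:general-score}, applied to each candidate intervention $M\to M^{(c)}$. Setting $M_0=M$ and $M_1=M^{(c)}$ in \eqref{eq:general-score} gives
\[
W(M^{(c)})-W(M)=\bar V(M)-\bar V(M^{(c)})-b'\big(\mathcal Q(M^{(c)})-\mathcal Q(M)\big)b=\mathcal S_c(M;b),
\]
so $\mathcal S_c$ is exactly the accuracy gain from amplifying $c$. The baseline value $W(M)$ is common to all candidates. Maximizing $W$ over the finite menu $\{M\}\cup\{M^{(c)}\}$ is therefore equivalent to maximizing $\mathcal S_c$ over $c$ and comparing the maximum with $0$, which is the score of no intervention. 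Existence of a maximizer is immediate because the candidate set is finite, and each $W(M^{(c)})$ is well defined by Theorem~\ref{thm:exist}. This yields the two remaining claims. If every $\mathcal S_c<0$, then $W(M)>W(M^{(c)})$ for all $c$, so no amplification is optimal. If $\max_c\mathcal S_c=0$, the best amplification and the status quo attain the same $W$, so they are tied.

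For the final claim, that the score is not a function of $(\beta_c,b_c)$ alone, I would expand $b'\Delta\mathcal Q\,b$ using \eqref{eq:overlap-decomp}, applied to both $\mathcal Q(M^{(c)})$ and $\mathcal Q(M)$. This gives
\[
\sum_{k}b_k^2\,\Delta\mathcal B_k+2\sum_{k<h}b_kb_h\,\Delta\mathcal C_{kh},
\]
which contains terms $b_kb_h$ with $k,h\neq c$, as well as cross terms $b_cb_h$. A single amplification changes $\Phi$ in many rows, so it generically changes $L$ and hence $\mathcal Q$ in many entries. The information gain also depends on all precisions through the loadings.

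To make ``cannot in general'' rigorous, I would give a small explicit counterexample. Take two candidates $c,c'$ with identical $(\beta,b)$ that sit in symmetric positions in $M$, so that $\Delta V$ and the own-bias terms coincide. Then change only the bias of a third user $h$, whose influence profile overlaps differently with $\ell_c$ and $\ell_{c'}$ after amplification. Flipping the sign of $b_h$ reverses the sign of $b_cb_h\Delta\mathcal C_{ch}-b_{c'}b_h\Delta\mathcal C_{c'h}$, and with it the ranking of $c$ and $c'$.

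The main obstacle is this last step: exhibiting concrete $\Delta\mathcal C$ entries that are nonzero and asymmetric. I would first try $N=3$ or $4$ with a directed star or path and compute $L=(I-R\Phi)^{-1}(I-R)$ explicitly. If that is messy, I would fall back on the remark that $\mathcal S_c$ depends on $b$ as a full quadratic form, so by continuity any nonzero off-diagonal $\Delta\mathcal Q_{kh}$ with $k,h\neq c$ already suffices.
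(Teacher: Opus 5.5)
Your proposal is correct and matches the paper's proof: apply Proposition~\ref{prop:general-score} to the pair $(M,M^{(c)})$ to get $W(M^{(c)})-W(M)=\mathcal S_c(M;b)$, compare with the zero score of no intervention, and trace the dependence on other users to the cross terms in \eqref{eq:overlap-decomp}. The paper stops at that observation, so your explicit example is extra rigor; if you use the sign-flip version, note that flipping $b_h$ reverses only the part of $\mathcal S_c-\mathcal S_{c'}$ linear in $b_h$ (the $b_h^2$ terms need not cancel when $h$ is asymmetric with respect to $c$ and $c'$), so set the remaining biases to zero and take $|b_h|$ small relative to $|b_c|$, or simply use your fallback that any nonzero entry of $\Delta\mathcal Q$ outside the $(c,c)$ position makes $\mathcal S_c$ vary with other users' biases.
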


 Giving source $c$ additional reach creates an
informational benefit, captured by the reduction in $\bar V$, but it can also
change how directional preferences propagate through the coordination network.
The platform should therefore rank sources by the \emph{net} accuracy effect of
amplification rather than by signal precision or bias in isolation. An important implication is that a source's effect depends on the bias
composition already present in the network. A moderately biased source can be
harmful when its influence reinforces other biases in the same direction,
whereas a source with a larger bias can be less harmful when its influence
partially offsets existing distortions. The off-diagonal elements of
$\mathcal Q$ capture precisely these interactions.\footnote{The score in \eqref{eq:source-score} is a marginal comparison against a fixed
baseline network. It therefore ranks mutually exclusive one-source
interventions, but it is not generally additive across several simultaneous
amplifications. If the platform amplifies a set of sources $H$, the joint
network $M^{(H)}$ must be evaluated directly using
\eqref{eq:general-score}, because the influence profiles of the amplified
sources can interact.}

A simple comparison illustrates why the off-diagonal terms matter. Consider two candidate hubs that occupy the same network position and have the same precision and absolute bias $b>0$, but opposite signs: one is aligned with the receiving camp and the other is opposed to it. Because the proposed network intervention is the same, $\Delta V$ and $\Delta\mathcal Q$ are identical across the two candidates, and screening on precision and own bias cannot distinguish them. Changing only the sign of the hub entry gives
\[
\mathcal S_{\mathrm{opposed}}-\mathcal S_{\mathrm{aligned}}
=4b\sum_{h\ne c}b_h\,\Delta\mathcal Q_{ch},
\]
so the diagonal term $b^2\Delta\mathcal B_c$ cancels. The ranking is determined by the cross entries of $\Delta\mathcal Q$ against the ambient bias profile. In the two-camp computation of Section~\ref{sec:engagement}, the proposed amplification creates positive influence overlap with the receiving camp, so the opposed candidate offsets rather than reinforces the existing distortion. At the baseline calibration with $\beta_c=45$ and ordinary-user biases $b_i=\pm0.6$, every relevant $\Delta\mathcal Q_{ch}$ toward the receiving camp is positive, and the decision reverses: $\mathcal S_{\mathrm{aligned}}=-0.0006<0<\mathcal S_{\mathrm{opposed}}=0.0014$. The single-biased-source case considered next removes these interaction terms
and yields a tractable bias threshold.

\begin{proposition}\label{prop:multiplier}
Suppose source $c$ is the only biased user: $b_j=0$ for every $j\ne c$. Then
in any network $M$ the equilibrium bias profile is $d(M)=b_c\,\ell_c(M)$, and the bias channel of
accuracy equals $b_c^2\,\mathcal B_c(M)$, so that
\[
W(M)=-\bar V(M)-b_c^2\,\mathcal B_c(M).
\]
Consequently, for any two networks $M_0$ (source $c$ not amplified) and $M_1$ (source $c$
amplified), writing $\Delta V=\bar V(M_0)-\bar V(M_1)$ for the variance-channel gain and
$\Delta\mathcal B_c=\mathcal B_c(M_1)-\mathcal B_c(M_0)$ for the change in the multiplier,
\[
W(M_1)\ge W(M_0)
\quad\Longleftrightarrow\quad
\Delta V\ \ge\ b_c^2\,\Delta\mathcal B_c .
\]
If amplification raises the source's multiplier, $\Delta\mathcal B_c>0$, a real bias threshold
exists only when the variance gain is nonnegative: for $\Delta V>0$ the condition is
$|b_c|\le b_c^\ast(M_0,M_1):=\sqrt{\Delta V/\Delta\mathcal B_c}$; for $\Delta V=0$ it holds, with
equality, only at $b_c=0$; and for $\Delta V<0$ amplification lowers accuracy at every bias
magnitude. (When instead
$\Delta\mathcal B_c\le0$, amplification weakly raises accuracy for all $b_c$ whenever
$\Delta V\ge0$.) The accuracy comparison itself is exact for every finite directed network. Here the
full bias channel $\mathcal B_c(M)=\frac1N\|\ell_c(M)\|^2$ includes the source's own residual
component $(\ell_c)_c$, so no user's contribution is dropped.
\end{proposition}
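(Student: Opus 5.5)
The plan is to specialize Proposition~\ref{prop:general-score} to a bias vector supported on a single coordinate and then do elementary case analysis on a scalar inequality. First, since $b=b_c e_c$ with $b_j=0$ for $j\neq c$, linearity of the bias solution \eqref{eq:bias-solution} gives $d(M)=L(M)b=b_c L(M)e_c=b_c\,\ell_c(M)$ for any finite directed network $M$. Theorem~\ref{thm:exist} guarantees that $(I-R\Phi(M))$ is invertible when $\bar r<1$, since $\Phi(M)$ is row-stochastic and $\|R\Phi(M)\|_\infty\le\bar r<1$. Substituting into \eqref{eq:general-bias-accuracy}, the quadratic form collapses to a single diagonal entry:
\[
b'\mathcal Q(M)b=b_c^2\,e_c'\mathcal Q(M)e_c=b_c^2\,\mathcal Q_{cc}(M)=b_c^2\,\mathcal B_c(M).
\]
Equivalently, by Proposition~\ref{prop:decomp}, $\frac1N\sum_i d_i^2=\frac{b_c^2}{N}\|\ell_c(M)\|^2$. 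This yields $W(M)=-\bar V(M)-b_c^2\mathcal B_c(M)$.

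Second, I will difference this expression across $M_0$ and $M_1$. By Theorem~\ref{thm:exist} and Proposition~\ref{prop:decomp}, $\bar V$ does not depend on $b$. Hence
\[
W(M_1)-W(M_0)=\Delta V-b_c^2\,\Delta\mathcal B_c,
\]
and the stated equivalence follows immediately. The remaining claims are a case analysis of the inequality $\Delta V\ge b_c^2\Delta\mathcal B_c$, treated as a condition on $b_c^2\ge0$:
\begin{itemize}
\item If $\Delta\mathcal B_c>0$ and $\Delta V>0$, divide by $\Delta\mathcal B_c$ to get $b_c^2\le\Delta V/\Delta\mathcal B_c$, i.e.\ $|b_c|\le b_c^\ast$.
\item If $\Delta\mathcal B_c>0$ and $\Delta V=0$, the condition becomes $b_c^2\Delta\mathcal B_c\le 0$, which forces $b_c=0$, where it holds with equality.
\item If $\Delta\mathcal B_c>0$ and $\Delta V<0$, the right side is nonnegative while the left is negative, so the inequality fails at every $b_c$.
\item If $\Delta\mathcal B_c\le0$ and $\Delta V\ge0$, the right side is nonpositive, so the inequality holds for all $b_c$.
\end{itemize}

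The final remark needs only that $\mathcal B_c$ is defined as the full squared norm $\frac1N\|\ell_c\|^2$, which includes the coordinate $(\ell_c)_c$. This is exactly what the substitution produced, since $d_c=b_c(\ell_c)_c$ appears among the $d_i^2$. Exactness for every finite directed network is inherited from Proposition~\ref{prop:general-score}.

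I expect no real obstacle. The only step needing care is verifying that $\bar V$ is genuinely bias-free and that $d=L(M)b$ holds for every network, including those where the uniform-weight convention for an empty feed group is used. Both follow from Theorem~\ref{thm:exist}, which is stated for arbitrary $M$, and from the fact that the convention still leaves $\Phi(M)$ row-stochastic.
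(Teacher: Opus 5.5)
Your proposal is correct and follows essentially the same route as the paper: set $b=b_ce_c$ so that $d(M)=b_c\,\ell_c(M)$, substitute $\frac1N\|d(M)\|^2=b_c^2\mathcal B_c(M)$ into the variance--bias decomposition, difference across $M_0$ and $M_1$, and read the cases off $\Delta V\ge b_c^2\Delta\mathcal B_c$. Your explicit four-case analysis, and your note that $\bar V$ does not depend on $b$, spell out what the paper compresses into ``all cases follow directly.''
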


Proposition~\ref{prop:multiplier} gives the
information---bias trade-off. The vector $\ell_c(M)$ makes this network effect precise. When only one source is biased, all interaction
terms disappear, and the platform needs to compare just two objects: the
informational gain from giving the source additional reach and the increase in
the source's ability to transmit her directional preference through the
coordination network. The threshold $b_c^\ast$ is the point at which these two
effects exactly offset one another.

This also clarifies that the same value of $b_c$ can be relatively harmless when
the source has little coordination influence but costly when amplification
turns her into an important reference point for many users. Conversely, even a
biased source can be worth amplifying when the information she provides is
sufficiently valuable. This interpretation is related to the role of influential
opinion leaders in networks \citep{loepersteinerstewart2014}.

The next proposition provides a network representation of the bias influence
generated by a source.

\begin{proposition}
\label{prop:bonacich}

For any source $c$, the influence profile $\ell_c(M)$ admits the walk
representation
\[
\ell_c(M)
=
(1-r_c)\sum_{k=0}^{\infty}
\big(R\Phi(M)\big)^k e_c.
\]
Hence, for every user $i$,
\[
(\ell_c(M))_i
=
(1-r_c)\sum_{k=0}^{\infty}
\big[(R\Phi(M))^k\big]_{ic}.
\]

The corresponding bias-amplification multiplier is
\[
\mathcal B_c(M)
=
\frac{(1-r_c)^2}{N}
\sum_{i=1}^N
\left(
\sum_{k=0}^{\infty}
\big[(R\Phi(M))^k\big]_{ic}
\right)^2.
\]
Thus, $\mathcal B_c(M)$ increases with the strength of source $c$'s
equilibrium influence across the coordination network.
\end{proposition}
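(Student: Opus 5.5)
The plan is to expand the inverse in $L(M)=(I-R\Phi(M))^{-1}(I-R)$ as a Neumann series and then read off coordinates. Everything rests on showing that the spectral radius of $R\Phi(M)$ is strictly below one.

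\textbf{Step 1: the series converges.} By construction in \eqref{eq:phi}, each row of $\Phi(M)$ is nonnegative and sums to one. The weights within the feed sum to $q_i$, the weights outside it sum to $1-q_i$, and when one group is empty the remaining weights $1/(N-1)$ still sum to one. Hence row $i$ of $R\Phi(M)$ is nonnegative with sum $r_i$, so
\[
\|R\Phi(M)\|_\infty=\max_i r_i=\bar r<1 .
\]
This gives $\rho(R\Phi(M))\le\bar r<1$. It follows that $I-R\Phi(M)$ is invertible, which is consistent with the well-posedness in Theorem~\ref{thm:exist} and equation \eqref{eq:bias-solution}. It also follows that
\[
(I-R\Phi(M))^{-1}=\sum_{k=0}^\infty (R\Phi(M))^k ,
\]
where the series converges absolutely in $\|\cdot\|_\infty$ because $\|(R\Phi)^k\|_\infty\le\bar r^{\,k}$.

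\textbf{Step 2: the walk representation.} Since $(I-R)e_c=(1-r_c)e_c$, we get
\[
\ell_c(M)=(I-R\Phi(M))^{-1}(1-r_c)e_c=(1-r_c)\sum_{k\ge0}(R\Phi(M))^k e_c .
\]
Taking the $i$th coordinate gives the entrywise formula; applying a linear functional commutes with a convergent series. Each term $[(R\Phi)^k]_{ic}$ is a sum over walks of length $k$ from $i$ to $c$, weighted by products of $r\phi$ factors. This is the Katz--Bonacich interpretation with heterogeneous decay $r_i$.

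\textbf{Step 3: the multiplier and monotonicity.} The formula for $\mathcal B_c(M)$ follows by substituting the coordinates into $\mathcal B_c=\frac1N\|\ell_c\|^2$ and pulling out $(1-r_c)^2$. For the closing interpretive claim, note that every term in each series is nonnegative. So $\mathcal B_c$ is a sum of squares of nonnegative walk sums, and it increases whenever any entry of the influence profile grows. I would state this precisely as monotonicity of $\mathcal B_c$ in each $(\ell_c)_i\ge 0$ rather than as a claim about primitives, since raising one $\phi_{ij}$ lowers others in the same row.

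\textbf{Main obstacle.} The only delicate point is Step 1: verifying the row-sum-one property in the edge cases where a group is empty, and making sure $\bar r<1$ bounds the spectral radius uniformly. The rest is bookkeeping.
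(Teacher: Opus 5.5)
Your proposal is correct and follows the paper's proof essentially step for step: the row sums of $R\Phi(M)$ equal $r_i$, so $\|R\Phi(M)\|_\infty=\bar r<1$ justifies the Neumann series. The identity $(I-R)e_c=(1-r_c)e_c$, followed by squaring and averaging the coordinates, then gives the walk representation and the multiplier formula. You are slightly more careful than the paper in two places: you check the empty-group convention in \eqref{eq:phi} explicitly, and you read the closing claim as monotonicity in the nonnegative entries of $\ell_c$ rather than in the primitives, whereas the paper treats that sentence as interpretive.
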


Proposition~\ref{prop:bonacich} shows how a directional preference propagates
through the network. The term
$\big[(R\Phi(M))^k\big]_{ic}$ captures the influence of source $c$ on user $i$
through coordination paths of length $k$. The direct effect corresponds to
short paths, while longer paths capture indirect influence transmitted through
other users. The equilibrium influence of a source therefore depends not only
on how many users directly treat her as a reference point, but also on the
entire chain of coordination relationships that follows from those links.

This representation connects the bias channel in our model to the familiar
Katz--Bonacich logic of network influence
\citep{katz1953,bonacich1987,ballester2006,jackson2008}. There is, however, an
important distinction. The object $\ell_c(M)$ is a source-specific
\emph{influence profile}: it records how strongly a unit directional preference
at source $c$ affects each user in equilibrium. The multiplier
$\mathcal B_c(M)$ then aggregates the squared magnitude of this profile across
the population. It is therefore not the scalar Bonacich centrality of source
$c$, but a measure of how strongly the source's directional preference is
transmitted throughout the network.

Together with Proposition~\ref{prop:multiplier}, this gives a simple
interpretation of the cost of amplification. Giving a source more reach is
potentially harmful when it strengthens the paths through which her directional
preference affects other users. A source can therefore become costly not merely
because many users observe her, but because amplification places her in a
network position from which her preferred action becomes an important
coordination reference.

Propositions~\ref{prop:multiplier}--\ref{prop:bonacich} are not tied to a particular topology. For any directed network with a single biased source, they identify the source's coordination influence profile as the relevant object. The closed-form threshold derived next is the large-$N$ value of $b_c^*=\sqrt{\Delta V/\Delta\mathcal B_c}$ for one tractable local topology.

For a closed-form illustration, consider a canonical broadcast-with-local-reference benchmark. There is one biased hub with signed bias $b_c$, and representative followers are unbiased. A follower's neighbor-weighted reference group consists of the hub and two representative followers, which produces the factor $3$ in the normalization below. This normalization is specific to the chosen topology and is used only to obtain a tractable benchmark. Let $x$ be the common residual bias of followers and $d_c$ the hub's residual bias. They satisfy
\[
x=r\left[q\frac{d_c+2x}{3}+(1-q)x\right],\qquad d_c=(1-r)b_c+rx.
\]
Hence
\begin{equation}\label{eq:xres}
x
=
\frac{rq(1-r)}{3A-r^2q}\,b_c,\qquad A=1-r+\frac{rq}{3}.
\end{equation}
The follower bias channel grows quadratically in $b_c$. By contrast, in a central-receiver star the hub receives information but is not broadcast, so its bias remains contained. Let
\[
\Delta V=\bar V(M_{\mathrm{receiver}})-\bar V(M_{\mathrm{sender}})>0
\]
be the variance gain from broadcasting. In the large-$N$ limit this is the representative follower's variance gain because the hub's own contribution is $O(1/N)$. Broadcasting improves accuracy relative to the receiver structure exactly when this information gain exceeds the induced bias cost.

\begin{proposition}\label{prop:demagogue}
In the canonical broadcast-with-local-reference benchmark, the demagogue threshold is the
large-$N$ evaluation of the exact multiplier test of Proposition~\ref{prop:multiplier} in this
local topology. To leading order in $N$, amplifying the hub raises accuracy relative to not
amplifying it if and only if
\[
|b_c|\le b_c^\ast,
\]
where
\begin{equation}\label{eq:threshold}
b_c^\ast
=
\sqrt{\Delta V}\,\frac{3A-r^2q}{rq(1-r)}
=
\sqrt{\Delta V}\,\frac{3+rq}{rq},
\qquad
A=1-r+\frac{rq}{3}.
\end{equation}
In this canonical topology, the tolerable bias rises with the informational gain $\Delta V$ and
falls with amplification intensity $q$ and, holding $\Delta V$ fixed, with the coordination motive
$r$. The threshold should be read as a benchmark formula—the large-$N$, leading-order value of
the exact finite-$N$ threshold $b_c^\ast(M_0,M_1)=\sqrt{\Delta V/\Delta\mathcal B_c}$ of
Proposition~\ref{prop:multiplier}—not as a topology-invariant constant.
\end{proposition}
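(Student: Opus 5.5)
The plan is to obtain the result as a direct specialization of Proposition~\ref{prop:multiplier}. There, with a single biased source, $W(M_1)\ge W(M_0)$ holds if and only if $\Delta V\ge b_c^2\,\Delta\mathcal B_c$. The work is therefore to compute $\Delta\mathcal B_c=\mathcal B_c(M_{\mathrm{sender}})-\mathcal B_c(M_{\mathrm{receiver}})$ to leading order in $N$, and then solve for $|b_c|$. Throughout I take $\Delta V>0$ as given in the setup. Since the signal loadings do not depend on $b$ (Theorem~\ref{thm:exist}), $\Delta V$ is unaffected by the bias calculation.

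\emph{Step 1 (sender network).} I would solve the constant-term fixed point \eqref{eq:bias-scalar} in the broadcast-with-local-reference topology. Each follower puts feed weight $q$ split equally over the hub and two representative followers, and weight $1-q$ on the other users. Imposing the symmetric follower value $x$ and hub value $d_c$ gives the two displayed equations. Substituting $d_c=(1-r)b_c+rx$ into the follower equation and collecting terms gives $x\,(3A-r^2q)=rq(1-r)b_c$, which is \eqref{eq:xres}. The influence profile $\ell_c(M_{\mathrm{sender}})$ then has $N-1$ entries equal to $x/b_c$ and one hub entry equal to $d_c/b_c$, which is bounded. Hence
\[
\mathcal B_c(M_{\mathrm{sender}})=\Big(\tfrac{rq(1-r)}{3A-r^2q}\Big)^2+O(1/N).
\]

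\emph{Step 2 (receiver network).} In the central-receiver star, followers do not have the hub in their feed. The hub therefore enters a follower's reference group only through the out-of-feed weight $(1-q)/(N-1-|\mathcal N_i|)=O(1/N)$. I would bound $\ell_c(M_{\mathrm{receiver}})$ with the walk representation of Proposition~\ref{prop:bonacich}. Every walk from a follower back to $c$ must use at least one $O(1/N)$ edge, and the geometric sum converges because $\bar r<1$. This gives follower entries of order $O(1/N)$ and a hub entry of order $O(1)$, so $\mathcal B_c(M_{\mathrm{receiver}})=O(1/N)$.

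\emph{Step 3 (threshold).} I would insert these into the multiplier test, dropping $O(1/N)$ terms to leading order. The condition becomes $\Delta V\ge b_c^2\big(rq(1-r)/(3A-r^2q)\big)^2$, which gives the first expression in \eqref{eq:threshold}. The identity $3A-r^2q=3(1-r)+rq-r^2q=(1-r)(3+rq)$ yields the second form. For the comparative statics I would write $b_c^\ast=\sqrt{\Delta V}\,(1+3/(rq))$. This is increasing in $\Delta V$ and, holding $\Delta V$ fixed, decreasing in both $q$ and $r$.

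The main obstacle is Step 2 and the leading-order justification generally. One must check that the $O(1/N)$ corrections in both networks are uniform, so that they cannot overturn the inequality away from the knife-edge $|b_c|=b_c^\ast$. This is where the walk bound, the assumption $\bar r<1$, and the fact that the hub is a single user contributing only $O(1/N)$ to $\mathcal B_c$ do the work. A secondary check is that the symmetric ansatz in Step 1 is the unique solution. This follows from the invertibility of $I-R\Phi$ established in Theorem~\ref{thm:exist}.
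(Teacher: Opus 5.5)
Your proposal is correct and follows essentially the same route as the paper's proof: solve the symmetric fixed point for the follower residual $x$, note that the sender network's average bias channel is $x^2+O(1/N)$ while the receiver's is $O(1/N)$, and then solve $\Delta V\ge x^2$ using $3A-r^2q=(1-r)(3+rq)$. Your walk bound in Step 2 adds something the paper lacks, since the paper only asserts in one line that the hub reaches followers through the vanishing out-of-feed weight.
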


The approximation closely tracks the exact finite-$N$ crossover in the calibration used below.\footnote{Appendix~\ref{app:bias-validation} reports the finite-$N$ crossovers, the corresponding formula values, and convergence diagnostics.}

Proposition~\ref{prop:demagogue} gives a leading-order threshold, while Figure~\ref{fig:demagogue} reports the exact finite-$N$ comparison in the same local-ring topology. Followers observe two local neighbors; the central receiver observes the followers but is not observed by them, whereas the central sender is observed by the followers and becomes part of their local reference group. Because a follower's neighbor-weighted reference group under the sender includes the hub plus two ring neighbors, the figure uses the same local structure that generates the factor $3$ in \eqref{eq:threshold}. Solving the finite-$N$ equilibrium gives the receiver an accuracy of approximately $-0.147$ at $b_c=0$, and the sender crosses below the receiver at $b_c\approx0.41$. When the hub is neutral, the central sender performs best because it spreads precise information. As $b_c$ rises, the same reach also spreads directional distortion through the reference channel. Beyond the threshold, broadcasting the hub is no longer accuracy-improving relative to the receiver benchmark. This is the demagogue effect.

\begin{figure}[t]
\centering
\includegraphics[width=\textwidth]{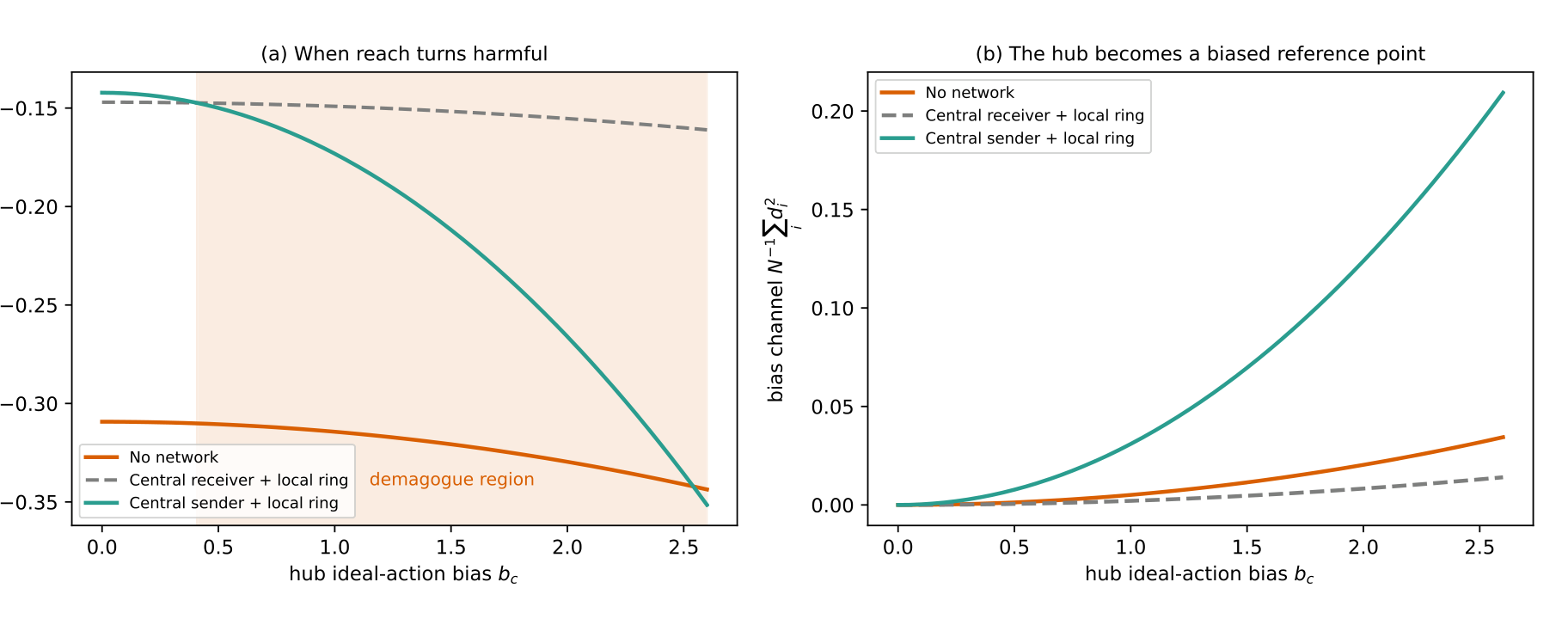}
\caption{The demagogue effect in the finite-$N$ local-ring topology ($r=0.85$, $q=0.7$,
$\beta=\alpha=2.5$, $N=16$). \emph{Left:} accuracy as the hub's own bias $b_c$ increases. The
central sender is best when the hub is neutral; at the demagogue threshold it falls below the
central-receiver benchmark, and at larger bias it becomes the worst of the plotted structures.
\emph{Right:} the polarization channel $N^{-1}\sum_i d_i^2$ rises quadratically under the central
sender, which makes the hub a biased coordination reference, while the central receiver keeps that
bias contained. Because the followers form a common local ring, the central receiver is not the
no-network benchmark.}
\label{fig:demagogue}
\end{figure}

\section{Implementation with User Choice}\label{sec:implementation}

So far, the platform has been able to choose the network structure directly in
order to improve truth-tracking accuracy. We now relax this assumption. The
platform can recommend changes in users' feeds, but users decide whether those
changes are privately desirable.

For a given network $M$, let
$
U_i(M)
:=
\E\!\left[
u_i\big(a^\ast(M),\theta\big)
\right]
$
denote user $i$'s equilibrium expected utility, where $a^\ast(M)$ is the
equilibrium action profile induced by $M$. Suppose the platform considers a
change from $M$ to $M'$. The platform recommends the change whenever
$
W(M')>W(M),
$
whereas user $i$ accepts it only if
$
U_i(M')>U_i(M).
$
The implementation problem arises when these two rankings disagree: a link may
improve collective truth tracking while being privately unattractive to the
user who must adopt it.

We have in mind a directed follower network, such as a Twitter-like platform.
A user can follow another source without requiring that source to follow her
back or approve the connection. Link formation is therefore unilateral rather
than bilateral. Likewise, a source can disappear from a user's effective feed
without requiring the source's consent. A deletion in the model need not be
interpreted literally as the platform recommending an ``unfollow.'' It can
instead represent a change in feed curation that makes disagreement more
salient or reduces the attractiveness of maintaining that source in the
user's feed, inducing the user to drop it herself.\footnote{For this reason,
the rest points below are not pairwise-stable networks in the bilateral
network-formation sense of \citet{jacksonwolinsky1996}.}

This distinction is particularly relevant for cross-cutting links. A precise
source with an opposing directional preference may improve collective accuracy
by providing valuable information or reducing the concentration of
same-direction influence, while at the same time lowering the receiving
user's expected utility by making her coordination environment less attractive.
Thus, the network preferred by an accuracy-oriented platform need not be
privately implementable.

We study this tension through an iterative recommendation procedure. At each
update round, the platform evaluates candidate link additions and deletions and
recommends those that strictly increase $W$. Users then accept or reject these
recommendations according to their own expected utility. Accepted changes
generate the next network,
$
M_{t+1}=\Psi(M_t).
$
A network is a rest point when none of the accuracy-improving changes proposed
by the platform is privately acceptable to the relevant user.

The procedure is not intended as a model of forward-looking network formation:
users do not solve an intertemporal problem. Rather, it provides a simple way
to study whether the accuracy-improving network interventions characterized
above can be implemented when users retain control over their own feeds.
Appendix~\ref{app:sim} gives the complete algorithm and computational details.
Section~\ref{sec:engagement} keeps this recommendation environment fixed but
changes the platform's objective, allowing us to compare an accuracy-oriented
recommender with an engagement-oriented one.

\paragraph{Private Incentives and Platform Recommendations.}

The recommendation constraint is almost irrelevant when users have no
directional preferences. In the bias-free benchmark, a link that improves
collective accuracy is usually also attractive to the user who receives it:
users accept $97\%$ of the platform's accuracy-improving recommendations.

This changes sharply once users have directional preferences. With two equally
sized camps, $b_i\in\{-0.6,+0.6\}$, users accept only $32\%$ of the same type
of accuracy-improving recommendations. The rejected links are predominantly
cross-cutting: $74\%$ involve a source with the opposite sign. Thus, the main
obstacle to implementing the accuracy-improving network is not user choice
itself, but users' reluctance to connect to informative sources whose preferred
direction differs from their own.

The reason is that users and the platform value cross-cutting links
differently. Such a link can improve $W$ by adding information and reducing 
the concentration of same-direction influence. The receiving user, however,
also cares about coordinating with the users who become salient in her feed.
An informative opposite-sign source can therefore be socially valuable while
remaining privately unattractive. User choice consequently acts as a selection
force against precisely some of the links that an accuracy-oriented platform
would like to create.

Figure~\ref{fig:formation} shows the network consequences of this wedge. We
compare the accuracy-oriented recommender with two decentralized benchmarks:
friends-of-friends recommendations and random rewiring. At both $N=20$ and
$N=50$, the accuracy-oriented rule produces higher truth-tracking accuracy.
Friends-of-friends recommendations instead reinforce homophily, while random
rewiring expands exposure without systematically directing attention toward
informative sources. When directional preferences are removed, these
differences largely disappear, indicating that the ranking is driven primarily
by the bias channel rather than by network density alone.

\begin{figure}[t]
\centering
\includegraphics[width=\textwidth]{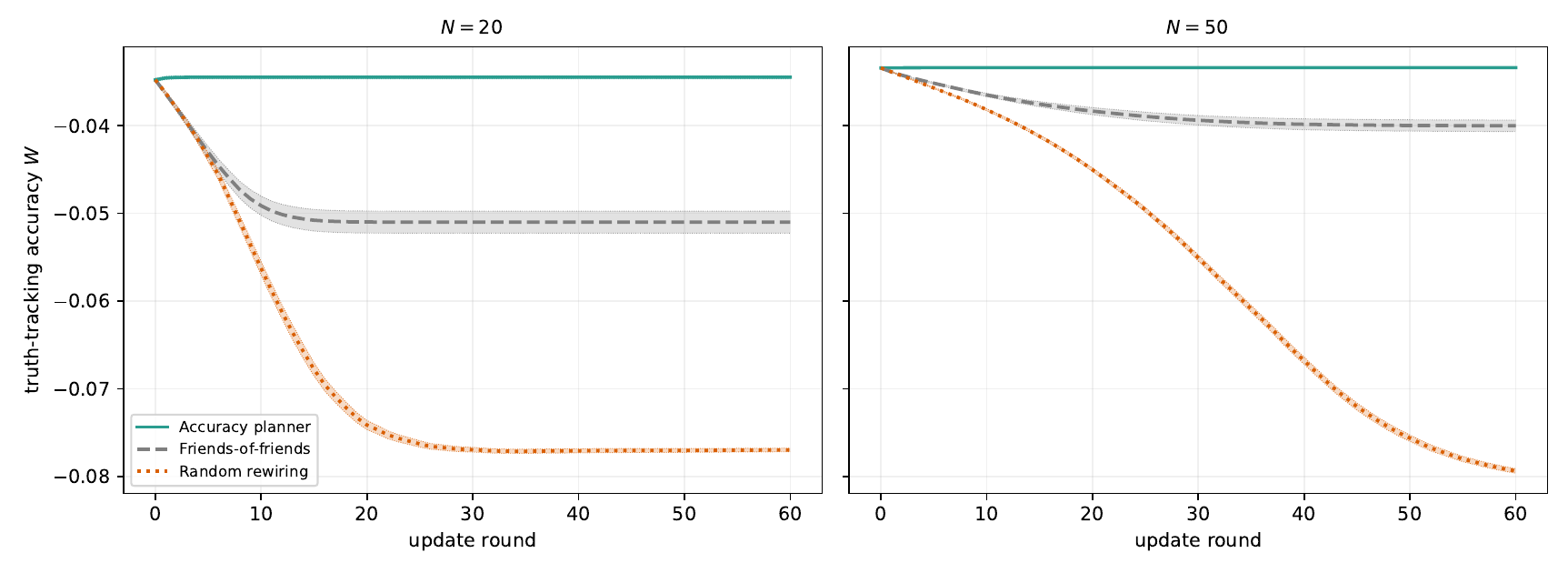}
\caption{Recommendation dynamics at $N=20$ and $N=50$ over $60$ update rounds
($r=0.7$, $q=0.5$, $\alpha=30$, $b_i\in\{-0.6,+0.6\}$; 100 paired Monte
Carlo runs at each size). Lines are means and shaded regions are pointwise
95\% confidence intervals. The accuracy-oriented recommender maintains higher
truth-tracking accuracy than friends-of-friends and random rewiring at both
network sizes. Appendix~\ref{app:sim} reports the simulation protocol and
robustness exercises.}
\label{fig:formation}
\end{figure}

The broader implication is that optimal network design and implementable
network design need not coincide. Even when the platform knows which links
would improve collective accuracy, those links may fail to form because the
users who must adopt them evaluate the same intervention according to their own
preferences. The implementation constraint is therefore endogenous to user
choice rather than a consequence of the platform's inability to identify the
accuracy-improving network.

\section{When the Platform Values Engagement}\label{sec:engagement}

The previous section keeps the platform's objective fixed and limits its ability
to implement the desired network. We now consider a different departure from
the benchmark: the platform itself may not value truth-tracking accuracy.
Commercial recommendation systems may instead place weight on engagement,
retention, or other measures of user activity
\citep{zhuravskaya2020,aridor2024}. We therefore ask how the allocation of
reach changes when the platform values outcomes that users themselves find
attractive.

Our purpose is not to build a structural model of clicks or platform revenue.
Instead, we use average expected user utility as a reduced-form engagement
criterion. This isolates a tension already present in the model. Accuracy
rewards actions that track the state, whereas users prefer actions closer to
their own directional ideals and also value coordination with socially salient
neighbors.

\paragraph{Accuracy and engagement.}

Suppose that a reduced-form retention index is increasing in expected utility,
$
R_i=\rho_0+\rho_1\E[u_i(a,\theta)] $, $\rho_1>0.
$
Maximizing average retention is then equivalent to maximizing
\begin{equation}\label{eq:engagement}
\begin{split}
\Pi
&=\frac1N\sum_{i=1}^N \E[u_i(a,\theta)]\\
&=-\frac1N\sum_i
\left[
(1-r_i)\big(V_i+(d_i-b_i)^2\big)
+r_i\sum_j\phi_{ij}
\big(V_i+V_j-2\Cov_{ij}+(d_j-d_i)^2\big)
\right].
\end{split}
\end{equation}

The two objectives evaluate the same equilibrium differently. The
truth-tracking criterion $W$ penalizes deviations of actions from the state
$\theta$. The engagement criterion $\Pi$ instead evaluates actions relative to
users' preferred points $\theta+b_i$ and also rewards coordination with salient
neighbors. A network intervention can therefore raise users' expected utility
even when it moves equilibrium actions farther from the state.

\begin{proposition}\label{prop:divergence}
Holding signal loadings, variances, covariances, and reference weights fixed,
the two objectives evaluate the directional component differently.
Truth-tracking accuracy is maximized at $d_i=0$ for every user. By contrast,
the engagement objective rewards movements of $d_i$ toward the user's own
ideal $b_i$ and movements that reduce disagreement among socially connected
users. The engagement-optimal directional components, therefore, need not
coincide with the accuracy-optimal ones.
\end{proposition}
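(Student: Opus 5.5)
We treat the signal loadings, the variances $V_i$, the covariances $\Cov_{ij}$ and the reference weights $\phi_{ij}$ as fixed. This is legitimate because, by Theorem~\ref{thm:exist}, the loadings do not depend on the biases. Each objective is then a function of the directional vector $d=(d_1,\ldots,d_N)'$ alone. The plan is to isolate the $d$-dependent part of each objective, minimize it, and exhibit a case where the two minimizers differ.

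\emph{Accuracy.} We start from Proposition~\ref{prop:decomp}, which gives $W=-\frac1N\sum_i(V_i+d_i^2)$. With the $V_i$ held fixed, the $d$-dependent part is $-\frac1N\sum_i d_i^2$. This is strictly concave in $d$ and is uniquely maximized at $d=0$.

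\emph{Engagement.} From \eqref{eq:engagement}, the $d$-dependent part of $\Pi$ is
\[
-\frac1N\sum_i\Big[(1-r_i)(d_i-b_i)^2+r_i\sum_{j}\phi_{ij}(d_j-d_i)^2\Big].
\]
This separates into two components. The first term is a penalty that is reduced by moving $d_i$ toward $b_i$. The second term is a disagreement penalty that is reduced by moving $d_i$ and $d_j$ closer together whenever $\phi_{ij}>0$. These are exactly the two forces named in the statement.

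\emph{Non-coincidence.} It suffices to show that $d=0$ is generally not a maximizer of this part of $\Pi$. We compute the gradient at $d=0$. The coordination terms contribute nothing there, since every difference $d_j-d_i$ vanishes. The partial derivative with respect to $d_i$ is therefore $\frac2N(1-r_i)b_i$. This is nonzero whenever $b_i\neq0$, because $r_i<1$. So whenever some user has $b_i\neq0$, $d=0$ fails the first-order condition for $\Pi$, and the engagement-optimal $d$ differs from the accuracy-optimal one.

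As a concrete illustration, take all $b_i=b>0$. Then $d_i=b$ for every $i$ sets both penalty terms in $\Pi$ to zero and is therefore optimal for $\Pi$. The same vector yields the accuracy loss $b^2$, so it is strictly worse than $d=0$ under $W$.

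The only delicate point is interpretive rather than technical. The claim compares objectives over $d$ with the second-moment terms held fixed, and we have to argue that this partial-equilibrium exercise is coherent. It is, because the decomposition of both objectives is additive in the variance and bias components: the cross terms $\E[(a_i-\theta)]$ multiply zero-mean noise, so varying $d$ does not change $V_i$ or $\Cov_{ij}$.
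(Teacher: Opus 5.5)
Your argument is correct. It shares the paper's skeleton: both isolate the $d$-dependent parts of $W$ and $\Pi$ from the decompositions and note that $W$ is uniquely maximized at $d=0$. The key step on the engagement side is different, though.

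\emph{The paper's route.} It writes out the full first-order system for $\mathcal L(d)=\sum_i(1-r_i)(d_i-b_i)^2+\sum_i r_i\sum_j\phi_{ij}(d_j-d_i)^2$ and uses strict convexity to obtain a unique engagement optimum. It then notes that the coefficient matrix is a nonsingular symmetric M-matrix with entrywise nonnegative inverse. Hence the optimum inherits the sign of the ideals within same-sign components and is nonzero unless the ideals vanish.

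\emph{Your route.} You only test the candidate $d=0$. The coordination terms have zero gradient there, so the gradient is $\frac2N(1-r_i)b_i$, and any $b\neq0$ rules out $d=0$. This is more elementary and covers mixed-sign bias profiles at once. The paper's sign-inheritance remark addresses same-sign components, and its general conclusion really rests on nonsingularity. Your equal-bias example also exhibits an explicit maximizer, so ``engagement-optimal'' is not vacuous.

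\emph{What the paper's route adds.} It gives existence and uniqueness of the engagement optimum for every $b$, and information about its direction. That is what supports the interpretive claim that engagement pulls $d_i$ toward $b_i$ beyond your special case.

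\emph{One small repair.} The additivity in your last paragraph comes from the loadings summing to one. That gives $a_i-\theta=\sum_k c_{ik}\varepsilon_k+c_{iy}\eta+d_i$, so the cross term $2d_i\,\E[\sum_k c_{ik}\varepsilon_k+c_{iy}\eta]$ vanishes. The same fact makes the mean of $a_j-a_i$ equal $d_j-d_i$. Your phrase ``cross terms $\E[(a_i-\theta)]$ multiply zero-mean noise'' does not state this correctly.
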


Proposition~\ref{prop:divergence} identifies the source of the divergence. An
engagement-oriented platform may value a same-sign connection even when the
source contributes little information because the connection moves actions
toward users' preferred directions and facilitates coordination within their
reference groups. Engagement incentives can therefore favor more homogeneous
local environments, even when those environments are less attractive from the
perspective of truth tracking.

Figure~\ref{fig:engagement} shows that this distinction survives once the full
equilibrium response to network changes is taken into account. In the baseline
simulations, each recommender performs better according to its own objective:
the accuracy-oriented platform generates higher $W$, while the
engagement-oriented platform generates higher $\Pi$. The engagement-oriented
network also exhibits greater polarization and larger cross-camp action gaps at
both $N=20$ and $N=50$.

\begin{figure}[t]
\centering
\includegraphics[width=\textwidth]{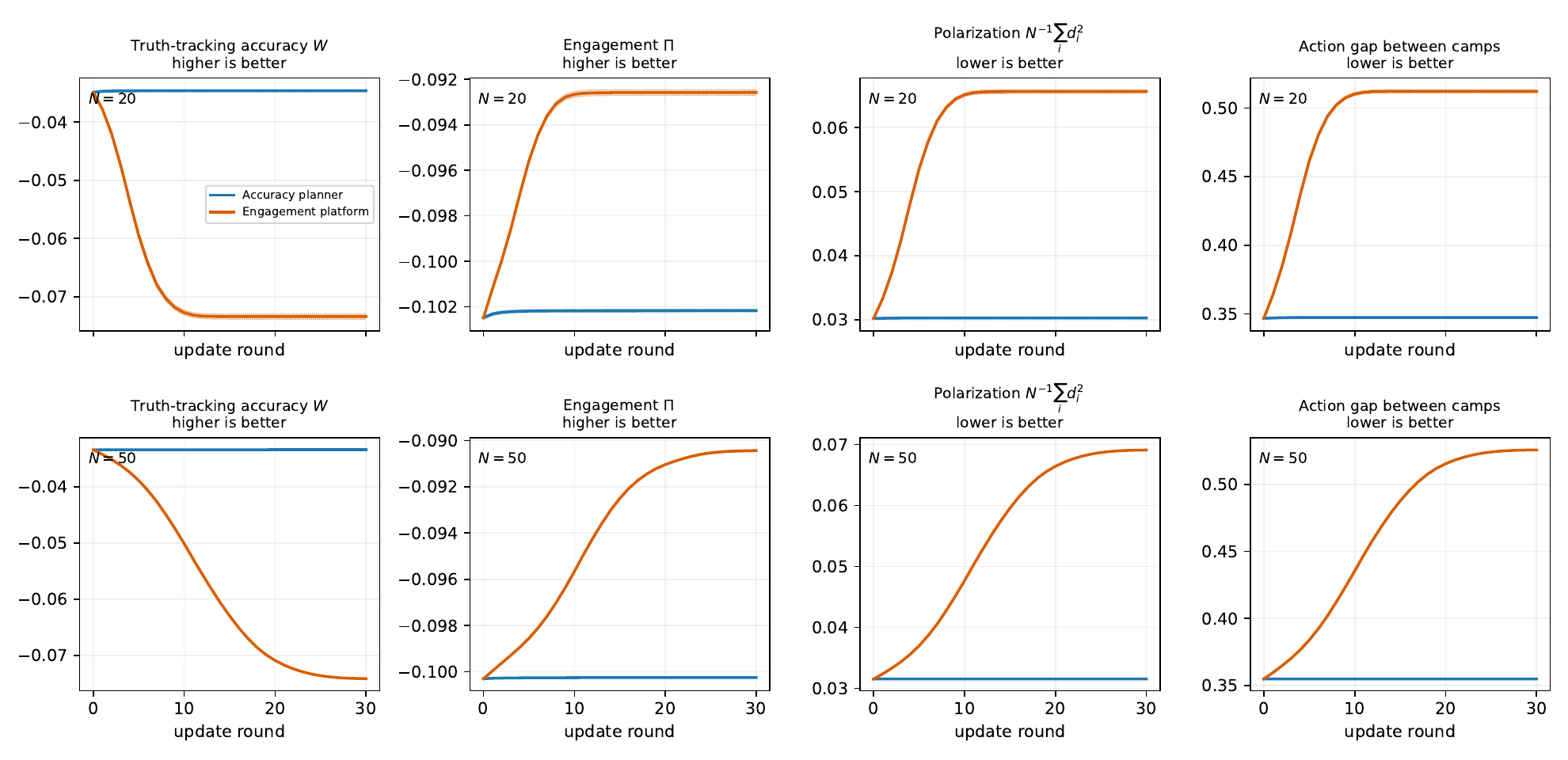}
\caption{Truth versus engagement at $N=20$ (top row) and $N=50$ (bottom row)
under the two platform objectives ($r=0.7$, $q=0.5$, $\alpha=30$,
$b_i\in\{-0.6,+0.6\}$; 100 paired Monte Carlo runs at each size).
The engagement-oriented recommender attains higher $\Pi$ but lower
truth-tracking accuracy, greater polarization, and a larger cross-camp
residual-action gap. Appendix~\ref{app:sim} reports the numerical values and
additional robustness exercises.}
\label{fig:engagement}
\end{figure}

Proposition~\ref{prop:divergence} shows why accuracy and engagement can value
the same network change differently. We now ask whether this difference is
large enough to reverse an actual amplification decision.

Consider an intervention from $M_0$ to $M_1$. Its effect on truth-tracking
accuracy is
\[
\Delta W
=
\underbrace{\Delta V}_{\text{information gain}}
-
\underbrace{b'\Delta\mathcal Q b}_{\text{change in bias risk}}.
\]
An accuracy-oriented platform therefore favors amplification only when its
informational benefit is large enough to compensate for the additional
directional distortion it creates. The engagement-oriented platform evaluates
the same intervention differently: it also values movements toward users'
preferred actions and reductions in disagreement within their reference
groups.

This difference generates a simple pattern  as the
amplified source becomes increasingly biased:
\[
\begin{array}{lll}
\text{informative, moderately biased source}
&:&
\Delta W>0,\quad \Delta\Pi>0,
\\[1.5mm]
\text{validating source, but too biased for accuracy}
&:&
\Delta W<0,\quad \Delta\Pi>0,
\\[1.5mm]
\text{very extreme source}
&:&
\Delta W<0,\quad \Delta\Pi<0.
\end{array}
\]

The first region creates no conflict: the source provides enough information
to improve truth tracking and is also attractive from the perspective of user
utility. In the second region, the information gain is no longer sufficient
to offset the source's bias cost, so the accuracy-oriented platform rejects
amplification. Yet users still benefit from validation and coordination with
like-minded reference groups, so the engagement-oriented platform accepts it.
This is the region in which the two platform objectives disagree:
$
\Delta W<0<\Delta\Pi.
$
If the source becomes sufficiently extreme, however, validation eventually
gives way to overshooting or additional coordination losses, and engagement
also falls. The disagreement is therefore an intermediate region rather than
a mechanical consequence of source bias.\footnote{ Appendix~\ref{app:sim} shows that
this three-region pattern persists across alternative coordination parameters
and bias magnitudes.}

The same trade-off can be expressed directly in the platform's objective.
Suppose the platform places weight on both truth-tracking accuracy and
engagement:
\[
O_\lambda=(1-\lambda)W+\lambda\Pi,
\qquad \lambda\in[0,1].
\]
For a fixed intervention,
\[
\Delta O_\lambda
=
(1-\lambda)\Delta W+\lambda\Delta\Pi.
\]
Consider the disagreement region, where $\Delta W<0<\Delta\Pi$. When
$\lambda$ is small, the accuracy loss dominates and the platform rejects the
intervention. As the weight on engagement increases, the same intervention
eventually becomes attractive. The switch occurs at
$
\lambda^*
=
\frac{-\Delta W}{\Delta\Pi-\Delta W},
$
so that
\[
\Delta O_\lambda
\begin{cases}
<0, & \lambda<\lambda^*,\\
=0, & \lambda=\lambda^*,\\
>0, & \lambda>\lambda^*.
\end{cases}
\]

The cutoff $\lambda^*$ tells us how much weight the platform must place on
engagement before it changes its decision about a particular source. If
$\lambda<\lambda^*$, the loss in truth-tracking accuracy dominates and the
platform does not amplify the source. If $\lambda>\lambda^*$, the engagement
gain dominates and the platform amplifies it. Because the accuracy loss and
engagement gain differ across sources and network interventions, $\lambda^*$
also differs across cases. It should therefore be interpreted as a
source-specific switching point, rather than as a universal regulatory
threshold.

\paragraph{The role of social-reference coupling.}\label{sec:decouple}
The baseline model gives a feed link two roles: it determines whose information
a user observes and who becomes salient for coordination. If the coordination
matrix $\Phi$ is held fixed while the information network changes, the
directional component $d=(I-R\Phi)^{-1}(I-R)b$ is unchanged. Purely
informational rewiring can therefore improve information aggregation, but it
cannot change polarization through the bias-propagation channel studied here.
Thus, the broadcast and precision-ranking results do not rely on this coupling,
whereas the polarization effect arises because amplification also changes who
becomes a salient reference point. Appendix~\ref{app:sim} verifies this
invariance numerically.






\section{Conclusion}\label{sec:discussion}

This paper studies the allocation of reach as a network-design problem. A platform chooses who observes whom in an environment where agents care both about tracking an underlying state and coordinating with others. The central difficulty is that a broadcast  link can have two effects at once: it can provide information, and it can make the source more important as a coordination reference. The value of amplification, therefore, depends not only on what a source knows but also on how the source's preferred action propagates through the network.

The main theoretical result gives an exact characterization of this trade-off for arbitrary finite directed networks. Equilibrium separates a bias-independent information component from a precision-independent directional component. As a consequence, the change in truth-tracking accuracy generated by any network intervention equals its information gain minus the change in quadratic bias risk. The influence-overlap matrix summarizes this second term. Its diagonal entries capture the propagation of individual sources' directional preferences, while its off-diagonal entries capture reinforcement and offsetting between sources. This is why neither precision nor own bias alone is sufficient for ranking amplification decisions: what matters is how a candidate source's influence interacts with the bias composition already present in the network.

The benchmark results illustrate different margins of this general criterion. In the bias-free broadcast environment, making a private signal commonly visible creates a second common signal and eliminates the Morris–Shin region in which greater public precision lowers truth-tracking accuracy. When broadcast capacity is scarce and accuracy is monotone in total broadcast precision, the optimal rule is to allocate reach to the most precise sources. Once directional preferences affect coordination, however, the same informational intervention can become harmful. The single-biased-source case yields a threshold at which the information gain is exactly offset by the source's network-amplified bias cost, while the two-community benchmark shows that complete exposure within polarized groups can remain inferior to integrated exposure.

Implementation introduces a distinct wedge between desirable and privately acceptable networks. In the recommendation simulations, users are willing to accept most accuracy-improving links when directional preferences are absent but frequently reject informative cross-cutting links when they coordinate around same-sign reference groups. An engagement-oriented recommender strengthens this force because expected user utility rewards both validation and coordination comfort. In the baseline simulations, the resulting networks are more segregated and less accurate than those generated by the accuracy-oriented planner. Decoupling information exposure from coordination salience eliminates the polarization difference, identifying the reference-group channel as the mechanism behind this result.

The analysis suggests a broader distinction for information design in networks. Increasing exposure and increasing influence are not the same intervention. A designer may want to deliver information from a precise source without simultaneously making that source a dominant coordination reference. The model abstracts from endogenous signal production, strategic content choice, and forward-looking learning, all of which could interact with reach. Introducing these margins would change the incentives to produce and transmit information, but the basic trade-off would remain: network design determines both what agents learn and, when exposure creates social salience, whose preferences shape the actions of others.

\begin{singlespace}

\end{singlespace}

\newpage
\section*{Appendix}

\appendix

\section{Proof of Theorem~\ref{thm:exist}}\label{app:exist}

We first make the strategy space precise. For a fixed network $M$, a strategy of user $i$ is a measurable function of $y$, $x_i$, and $(x_j)_{j\in\Ni(M)}$. A strategy is \emph{translation-equivariant} if, for every common shift $t$,
\[
g_i\big(y+t,x_i+t,(x_j+t)_{j\in\Ni(M)}\big)
=
g_i\big(y,x_i,(x_j)_{j\in\Ni(M)}\big)+t.
\]
Let $\mathcal S_i(M)$ denote the set of translation-equivariant strategies satisfying $\E[(a_i-\theta)^2]<\infty$, and let $\mathcal S(M)=\prod_i\mathcal S_i(M)$. Expectations are taken over the signal noises. This formulation makes tracking errors square-integrable under the diffuse prior.

The proof has three parts. We first establish uniqueness for the diffuse-prior game, then construct the equilibrium as the limit of proper-prior equilibria, and finally show that the signal and bias coefficients solve separate systems.

\paragraph{Step 1: uniqueness.}
Suppose $a$ and $\widetilde a$ are two equilibria in $\mathcal S(M)$, and define
$
\delta_i=a_i-\widetilde a_i.
$
Because both strategies are translation-equivariant and have finite tracking error, $\delta_i$ is square-integrable. Subtracting the two best-reply equations \eqref{eq:bestreply} eliminates both the posterior term and the bias term:
\begin{equation}\label{eq:delta-fixed-point}
\delta_i
=
r_i\sum_{j\ne i}\phi_{ij}(M)\E_i[\delta_j].
\end{equation}
Conditional expectation is a contraction in $L^2$, so
$
\big\|\E_i[\delta_j]\big\|_2\le \|\delta_j\|_2.
$
Using $\sum_{j\ne i}\phi_{ij}(M)=1$ in \eqref{eq:delta-fixed-point},
$
\|\delta_i\|_2
\le
r_i\max_j\|\delta_j\|_2
\le
\bar r\max_j\|\delta_j\|_2.
$
Taking the maximum over $i$ gives
$
\max_i\|\delta_i\|_2
\le
\bar r\max_i\|\delta_i\|_2.
$
Since $\bar r<1$, the only possibility is $\delta_i=0$ almost surely for every $i$. Hence an equilibrium in $\mathcal S(M)$ is unique.

\paragraph{Step 2: existence and linearity.}
To construct the equilibrium, replace the diffuse prior by the proper prior
$
\theta\sim\mathcal N(0,\tau^{-1}),\qquad \tau>0.
$
For each $\tau$, define the best-reply operator on the product of users' square-integrable strategy spaces by
$
(T_\tau a)_i
=
(1-r_i)\big(\E_i^\tau[\theta]+b_i\big)
+r_i\sum_{j\ne i}\phi_{ij}(M)\E_i^\tau[a_j].
$
For two strategy profiles $a$ and $\widetilde a$, the same argument as above yields
$
\|T_\tau a-T_\tau\widetilde a\|
\le \bar r\|a-\widetilde a\|,
$
where the norm is the maximum of the users' $L^2$ norms. Thus $T_\tau$ is a contraction, so the proper-prior game has a unique Bayes--Nash equilibrium.

Because the state and signals are jointly Gaussian, conditional expectations of affine functions are affine. The best-reply operator therefore maps affine strategies into affine strategies. Iterating $T_\tau$ from an affine profile converges to the unique fixed point, which must consequently be affine. Write this equilibrium as
$
a_i^\tau=\sum_{k=1}^N c_{ik}^\tau x_k+c_{iy}^\tau y+d_i^\tau,
$
with $c_{ik}^\tau=0$ when user $i$ does not observe $x_k$.

The affine coefficient system is finite-dimensional and depends continuously on $\tau$. At $\tau=0$, the homogeneous system associated with two candidate affine equilibria has only the zero solution by Step 1, so its coefficient matrix is nonsingular. Hence the proper-prior coefficients converge as $\tau\downarrow0$. Their limit satisfies the diffuse-prior best-reply equations and therefore provides existence. By Step 1, this limit is the unique equilibrium in $\mathcal S(M)$.

\paragraph{Step 3: separation of signal loadings and biases.}
Under the diffuse prior, let
$
D_i=\alpha+\beta_i+\sum_{h\in\Ni(M)}\beta_h
$
be user $i$'s posterior precision. Matching coefficients in \eqref{eq:bestreply} gives, for every private signal $x_k$ observed by $i$,
\begin{equation}\label{eq:loading-system-app}
c_{ik}
=
A_i\frac{\beta_k}{D_i}
+r_i\sum_{j\ne i}\phi_{ij}(M)c_{jk},
\end{equation}
and for the public signal,
\begin{equation}\label{eq:loading-system-y-app}
c_{iy}
=
A_i\frac{\alpha}{D_i}
+r_i\sum_{j\ne i}\phi_{ij}(M)c_{jy},
\end{equation}
where
$
A_i
=(1-r_i)
+r_i\sum_{j\ne i}\phi_{ij}(M)
\sum_{k\notin\{i\}\cup\Ni(M)}c_{jk}.
$
The term $A_i$ collects the direct accuracy motive and the part of user $i$'s higher-order forecast that loads on signals she does not observe. Crucially, $b$ does not appear in \eqref{eq:loading-system-app}--\eqref{eq:loading-system-y-app}. The equilibrium signal loadings are therefore independent of directional biases.
The loadings add up to one. Let
$
s_i=\sum_k c_{ik}+c_{iy}.
$
Summing the coefficient equations yields
$
s_i=(1-r_i)+r_i\sum_{j\ne i}\phi_{ij}(M)s_j.
$
The vector $s=\mathbf 1$ solves this system. Since $\rho(R\Phi(M))\le\bar r<1$, the solution is unique, and therefore
\begin{equation}\label{eq:loadings-sum-one}
\sum_k c_{ik}+c_{iy}=1
\qquad\text{for every }i.
\end{equation}

Finally, matching the constant terms gives
$
d_i=(1-r_i)b_i+r_i\sum_{j\ne i}\phi_{ij}(M)d_j,
$
or in matrix form,
$
(I-R\Phi(M))d=(I-R)b.
$
Since $\rho(R\Phi(M))<1$, the inverse exists and
$
d=(I-R\Phi(M))^{-1}(I-R)b.
$
This system contains no signal precision. Hence, the constants are independent of $(\alpha,\{\beta_i\})$, while the signal loadings are independent of $b$. This proves the theorem.
\qed

\section{Proof of Theorem~\ref{thm:reversal}: equilibrium weights and accuracy}\label{app:reversal}

\paragraph{No-network benchmark.}
A representative user observes only $x_i$ and $y$. By symmetry, write
$
a_i=w_xx_i+w_yy,\qquad w_x+w_y=1.
$
The posterior mean is
$
\E_i[\theta]=\frac{\beta x_i+\alpha y}{\beta+\alpha}.
$
In the large-$N$ limit, user $i$ expects the population action to be
$
\E_i[\bar a]=w_x\E_i[\theta]+w_yy.
$
Substituting into the best reply gives
$
a_i=K\E_i[\theta]+rw_yy
$, $ K=(1-r)+rw_x.
$
Matching the coefficients on $x_i$ and $y$ yields
$
w_x=\frac{(1-r)\beta}{(1-r)\beta+\alpha},
\qquad
w_y=\frac{\alpha}{(1-r)\beta+\alpha}.
$
Because the signal noises are independent, the mean-squared tracking error is
$
\frac{w_x^2}{\beta}+\frac{w_y^2}{\alpha}
=
\frac{(1-r)^2\beta+\alpha}{((1-r)\beta+\alpha)^2},
$
which gives \eqref{eq:WMS}.

\paragraph{Broadcast benchmark.}
Now let $z=x_c$ be broadcast. A representative non-amplified user observes $(x_i,z,y)$. The source $c$ herself has vanishing population share, so her different information set does not affect population-average accuracy in the large-$N$ limit. Write
$
a_i=w_xx_i+w_zz+w_yy.
$
The posterior mean is
$
\E_i[\theta]=\frac{\beta x_i+\beta z+\alpha y}{2\beta+\alpha},
$
and, because $z$ and $y$ are common,
$
\E_i[\bar a]=w_x\E_i[\theta]+w_zz+w_yy.
$
The best reply becomes
$
a_i=K\E_i[\theta]+rw_zz+rw_yy,
\qquad K=(1-r)+rw_x.
$
Matching coefficients gives
$
w_x=\frac{(1-r)\beta}{(2-r)\beta+\alpha},\qquad
w_z=\frac{\beta}{(2-r)\beta+\alpha},\qquad
w_y=\frac{\alpha}{(2-r)\beta+\alpha}.
$
These weights sum to one. The corresponding mean-squared error is
$
\frac{w_x^2}{\beta}+\frac{w_z^2}{\beta}+\frac{w_y^2}{\alpha}
=
\frac{(1+(1-r)^2)\beta+\alpha}{((2-r)\beta+\alpha)^2},
$
which gives \eqref{eq:WS}.

\paragraph{Effective common precision.}
Evaluate the no-network expression at $\alpha+\beta$:
$
W_{\mathrm{MS}}(\alpha+\beta)
=
-\frac{(1-r)^2\beta+\alpha+\beta}
{((1-r)\beta+\alpha+\beta)^2}
=
W_{\mathrm S}(\alpha).
$
This proves part (i).

\paragraph{Derivative signs.}
Differentiating \eqref{eq:WMS} gives
$
\operatorname{sign}\left(\frac{dW_{\mathrm{MS}}}{d\alpha}\right)
=
\operatorname{sign}\big(\alpha+\beta(1-r)(1-2r)\big).
$
Thus, for $r>1/2$, the derivative is negative exactly when
$
\alpha<\beta(1-r)(2r-1)=\alpha^*.
$
The function $(1-r)(2r-1)$ reaches its maximum $1/8$ at $r=3/4$, so $\alpha^*\le\beta/8<\beta$. Broadcasting shifts effective common precision from $\alpha$ to $\alpha+\beta$, which lies strictly above the non-monotone region. Hence $dW_{\mathrm S}/d\alpha>0$ for every $r\in(0,1)$ and $\alpha>0$. The comparison of levels in the monotone region follows immediately from
$W_{\mathrm S}(\alpha)=W_{\mathrm{MS}}(\alpha+\beta)$. This proves part (ii).
\qed

\section{Proofs for Section~\ref{sec:optimal}}\label{app:optimal}

\begin{proof}[Proof of Lemma~\ref{lem:complete}]
In the complete network every user observes the public signal and every private signal. Hence all users have the same full-information posterior mean,
$
\widehat\theta
=
\frac{\alpha y+\sum_{i=1}^N\beta_i x_i}
{\alpha+\sum_{i=1}^N\beta_i}.
$
If all users choose $a_i=\widehat\theta$, the coordination loss is zero. The tracking mean-squared error is the posterior variance
$
\Var(\theta\mid y,x_1,\ldots,x_N)
=
\frac{1}{\alpha+\sum_i\beta_i}.
$
No alternative network can provide any user with more information than the full collection $(y,x_1,\ldots,x_N)$, and the posterior mean minimizes mean-squared error among actions based on that information. Hence no network can achieve a lower average tracking error. Therefore the complete network is accuracy-optimal and
$
W=-\frac{1}{\alpha+\sum_i\beta_i}.
$
\end{proof}

\begin{proof}[Proof of Theorem~\ref{thm:optimal}]
Fix an amplified set $H$ of size $m$ and let
$
T_H=\sum_{\ell\in H}\beta_\ell.
$
The candidate set is finite and $m/N\to0$, so population-average accuracy is governed by a representative ordinary user. Let the amplified signals be $z_1,\ldots,z_m$, with precisions $\beta_1,\ldots,\beta_m$. The representative user observes $(x_i,z_1,\ldots,z_m,y)$, where $x_i$ has precision $\beta$.

The posterior mean is
$
\E_i[\theta]
=
\frac{\beta x_i+\sum_{\ell\in H}\beta_\ell z_\ell+\alpha y}
{\beta+T_H+\alpha}.
$
Write
$
a_i=w_xx_i+\sum_{\ell\in H}w_{z_\ell}z_\ell+w_yy.
$
Because the amplified signals and $y$ are common, the same coefficient matching as in the broadcast proof gives, with
$
D=(1-r)\beta+T_H+\alpha,
$
$
w_x=\frac{(1-r)\beta}{D},\qquad
w_{z_\ell}=\frac{\beta_\ell}{D},\qquad
w_y=\frac{\alpha}{D}.
$
The weights sum to one. Therefore the tracking mean-squared error is
$
\frac{w_x^2}{\beta}
+\sum_{\ell\in H}\frac{w_{z_\ell}^2}{\beta_\ell}
+\frac{w_y^2}{\alpha}
=
\frac{(1-r)^2\beta+T_H+\alpha}{D^2},
$
which proves \eqref{eq:Wamp}. In particular, the identity of the amplified sources enters only through total broadcast precision $T_H$.

Differentiating with respect to $T_H$ gives
$
\operatorname{sign}\left(\frac{dW}{dT_H}\right)
=
\operatorname{sign}\big(T_H+\alpha+\beta(1-r)(1-2r)\big).
$
The most negative value of $(1-r)(1-2r)$ is $-1/8$, attained at $r=3/4$. Hence $\alpha\ge\beta/8$ implies $dW/dT_H\ge0$ for every $T_H\ge0$, with equality only at $\alpha=\beta/8$, $T_H=0$, and $r=3/4$. Away from that knife edge, accuracy is strictly increasing in $T_H$. An exact-use budget of $m$ broadcast slots is therefore maximized by choosing the $m$ largest precisions in the candidate set. The case $m=1$ gives the most precise single hub.
\end{proof}

\section{Proofs for Section~\ref{sec:bias}}\label{app:bias}

\begin{proof}[Proof of Proposition~\ref{prop:decomp}]
By \eqref{eq:loadings-sum-one}, the signal loadings in every equilibrium add to one. Substituting $x_k=\theta+\varepsilon_k$ and $y=\theta+\eta$ into \eqref{eq:linear-strategy} therefore gives
$
a_i-\theta
=
\sum_k c_{ik}\varepsilon_k+c_{iy}\eta+d_i.
$
The noise terms are independent and have mean zero. Hence
$
\E[(a_i-\theta)^2]
=
\sum_k\frac{c_{ik}^2}{\beta_k}
+\frac{c_{iy}^2}{\alpha}
+d_i^2.
$
Averaging across users and changing sign gives \eqref{eq:decomp}. Theorem~\ref{thm:exist} shows that the loadings do not depend on $b$, so directional preferences affect accuracy only through $d_i^2$.

Under uniform coordination and common $r$, the fixed point is
$
d_i=(1-r)b_i+\frac{r}{N-1}\sum_{j\ne i}d_j.
$
Summing across $i$ gives $\sum_i d_i=\sum_i b_i$. Substituting this identity back into the fixed point and taking $N\to\infty$ yields
$
d_i\to(1-r)b_i+r\bar b $, $
\bar b=\frac1N\sum_jb_j.
$
\end{proof}

\begin{proof}[Proof of Lemma~\ref{lem:polarized-communities}]
Write $N=2n$. In the integrated network, every user observes the public signal and all $N$ private signals. Hence all users share the full-information posterior mean
$
\widehat\theta^{I}
=
\frac{\alpha y+\beta\sum_{k=1}^{N}x_k}
{\alpha+N\beta},
$
whose posterior variance is $1/(\alpha+N\beta)$. Because coordination is uniform, symmetry implies that users in the $+b$ group have a common directional component $d_+^{I}$ and users in the $-b$ group have $d_-^{I}=-d_+^{I}$. For a user in the $+b$ group,
$
d_+^{I}
=(1-r)b
+\frac{r}{N-1}\left[(n-1)d_+^{I}+n d_-^{I}\right]
=(1-r)b-\frac{r}{N-1}d_+^{I}.
$
Therefore
$
d_+^{I}
=\frac{1-r}{1+r/(N-1)}b
$, $
d_-^{I}=-d_+^{I},
$
which gives
$
W^{I}
=-\frac{1}{\alpha+N\beta}
-\left(\frac{1-r}{1+r/(N-1)}\right)^2b^2.
$

In the segregated network, every user observes the public signal and all $n$ private signals generated in her own group. Users within a group therefore share a posterior mean with variance $1/(\alpha+n\beta)$. Since all coordination weight is placed on same-group users and all users in a group have the same bias, the directional fixed point for a group with sign $s\in\{-1,+1\}$ is
$
d_s^{S}=(1-r)sb+r d_s^{S},
$
so $d_s^{S}=sb$. Hence
$
W^{S}
=-\frac{1}{\alpha+n\beta}-b^2.
$
Finally, $\alpha+N\beta>\alpha+n\beta$ and
$
0<\frac{1-r}{1+r/(N-1)}<1,
$
so both the variance term and the squared directional term are strictly smaller in the integrated network. Therefore $W^{I}>W^{S}$.
\end{proof}

\begin{proof}[Proof of Proposition~\ref{prop:general-score}]
From \eqref{eq:bias-solution}, $d(M)=L(M)b$, where
$
L(M)=(I-R\Phi(M))^{-1}(I-R).
$
Using Proposition~\ref{prop:decomp},
$
W(M)
=-\bar V(M)-\frac1N\|L(M)b\|^2
=-\bar V(M)-b'\left(\frac1N L(M)'L(M)\right)b.
$
Since $\mathcal Q(M)=N^{-1}L(M)'L(M)$, this is \eqref{eq:general-bias-accuracy}. Taking the difference between $M_1$ and $M_0$ gives
$
W(M_1)-W(M_0)
=
\bar V(M_0)-\bar V(M_1)
-b'\big(\mathcal Q(M_1)-\mathcal Q(M_0)\big)b,
$
which is \eqref{eq:general-score}.

To obtain the overlap representation, write $L(M)=[\ell_1(M),\ldots,\ell_N(M)]$. Then
$
b'\mathcal Q(M)b
=
\frac1N\left\|\sum_c b_c\ell_c(M)\right\|^2,
$
and expanding the square gives \eqref{eq:overlap-decomp}. Finally, $R\Phi(M)$ is nonnegative and has spectral radius below one, so
$
(I-R\Phi(M))^{-1}=\sum_{k=0}^\infty(R\Phi(M))^k
$
is entrywise nonnegative. Hence every influence profile $\ell_c(M)$ is nonnegative and every overlap coefficient $\mathcal C_{ch}(M)$ is nonnegative. The sign of each cross term is therefore determined by $b_cb_h$.
\end{proof}

\begin{proof}[Proof of Corollary~\ref{cor:robust}]
By Proposition~\ref{prop:general-score}, an intervention improves accuracy exactly when
$
b'\Delta\mathcal Qb\le\Delta V.
$
If $\Delta\mathcal Q\succeq0$, this is the stated quadratic region. If $\Delta\mathcal Q\succ0$ and $\Delta V>0$, the region is an ellipsoid; if $\Delta V=0$, positive definiteness implies that only $b=0$ satisfies the inequality.

For the norm-bounded case, Rayleigh--Ritz gives
$
\sup_{\|b\|\le B}b'\Delta\mathcal Qb
=
B^2\max\{\lambda_{\max}(\Delta\mathcal Q),0\}.
$
The maximum with zero is needed because the admissible ball contains $b=0$. Hence the intervention raises accuracy for every $\|b\|\le B$ if and only if \eqref{eq:robust-score} holds. If $\Delta\mathcal Q$ is indefinite, its quadratic form takes both positive and negative values along eigenvectors associated with positive and negative eigenvalues, so there is no scalar topology-free threshold without additional restrictions on $b$.
\end{proof}

\begin{proof}[Proof of Corollary~\ref{cor:source-ranking}]
Apply Proposition~\ref{prop:general-score} to the pair $(M,M^{(c)})$. The resulting accuracy change is
$
W(M^{(c)})-W(M)=\mathcal S_c(M;b).
$
Thus the platform ranks mutually exclusive one-source interventions by $\mathcal S_c(M;b)$. If every score is strictly negative, no amplification strictly dominates every candidate; if the maximal score is zero, no intervention and the zero-score candidates are tied. The dependence on the entire bias vector follows from the off-diagonal terms in \eqref{eq:overlap-decomp}.
\end{proof}

\begin{proof}[Proof of Proposition~\ref{prop:multiplier}]
If source $c$ is the only biased user, then $b=b_ce_c$. Equation~\eqref{eq:bias-solution} gives
$
d(M)
=b_c(I-R\Phi(M))^{-1}(I-R)e_c
=b_c\ell_c(M).
$
Therefore
$
\frac1N\|d(M)\|^2
=b_c^2\frac1N\|\ell_c(M)\|^2
=b_c^2\mathcal B_c(M).
$
Substituting into the bias--variance decomposition yields
$
W(M)=-\bar V(M)-b_c^2\mathcal B_c(M).
$
Taking the difference across $M_0$ and $M_1$ gives
$
W(M_1)-W(M_0)
=
\Delta V-b_c^2\Delta\mathcal B_c.
$
All cases in the proposition follow directly. In particular, if $\Delta\mathcal B_c>0$ and $\Delta V>0$, amplification is accuracy improving exactly when
$
|b_c|\le\sqrt{\frac{\Delta V}{\Delta\mathcal B_c}}.
$
\end{proof}

\begin{proof}[Proof of Proposition~\ref{prop:bonacich}]
Because each row of $\Phi(M)$ sums to one,
$
\|R\Phi(M)\|_\infty
=
\max_i r_i\sum_{j\ne i}\phi_{ij}(M)
=
\bar r<1.
$
Hence $\rho(R\Phi(M))<1$ and the Neumann series converges:
$
(I-R\Phi(M))^{-1}
=
\sum_{k=0}^{\infty}(R\Phi(M))^k.
$
Since $(I-R)e_c=(1-r_c)e_c$,
$
\ell_c(M)
=(1-r_c)\sum_{k=0}^{\infty}(R\Phi(M))^ke_c.
$
The $(i,c)$ entry of $(R\Phi(M))^k$ is the total weight of all directed coordination walks of length $k$ from $i$ to $c$, where the weight of a walk is the product of its edge weights and coordination coefficients. Summing over $k$ therefore gives the discounted influence of source $c$ on user $i$. Squaring these components and averaging over users gives $\mathcal B_c(M)$, establishing the Katz--Bonacich interpretation.
\end{proof}

\begin{proof}[Proof of Proposition~\ref{prop:demagogue}]
We evaluate the exact single-source multiplier criterion in the canonical local topology and then take the large-$N$ limit. Followers are unbiased and the hub has bias $b_c$. Let $x$ denote the common follower residual and $d_c$ the hub residual. The bias fixed point gives
$
x=r\left[q\frac{d_c+2x}{3}+(1-q)x\right] $, $
d_c=(1-r)b_c+rx.
$
Substituting the second equation into the first and collecting terms yields
$
x
=
\frac{rq(1-r)}{3A-r^2q}b_c $, $
A=1-r+\frac{rq}{3},
$
which is \eqref{eq:xres}.

Under the sender topology, followers have asymptotically unit population share, so the average bias channel is $x^2+O(1/N)$; the hub's own contribution is only $d_c^2/N$. Under the receiver topology, the hub is not a salient observed source for followers. Its contribution enters only through the vanishing non-neighbor weight, so the average follower bias channel is $O(1/N)$. Hence the change in the average bias channel converges to $x^2$.

Let $\Delta V>0$ denote the variance gain from broadcasting. To leading order, amplification improves accuracy if and only if
$
\Delta V\ge x^2.
$
Substituting the expression for $x$ and solving for $|b_c|$ gives
$
|b_c|
\le
\sqrt{\Delta V}\,\frac{3A-r^2q}{rq(1-r)}.
$
Finally,
$
3A-r^2q
=3(1-r)+rq-r^2q
=(1-r)(3+rq),
$
so
$
b_c^*
=
\sqrt{\Delta V}\,\frac{3+rq}{rq}.
$
This is \eqref{eq:threshold}.
\end{proof}

\subsection{Finite-network validation of the canonical threshold}\label{app:bias-validation}
The closed-form threshold uses the large-$N$ approximation for the bias coefficient, while the variance gain is recomputed at each finite $N$. Holding $(r,q,\alpha,\beta)=(0.85,0.7,2.5,2.5)$ and the local-ring topology fixed, let
\[
\Delta V_N=\bar V_N(M_{\mathrm{receiver}})-\bar V_N(M_{\mathrm{sender}}).
\]
Inserting the exact $\Delta V_N$ into \eqref{eq:threshold} gives $0.419$ at $N=16$ and $0.699$ at $N=100$. The exact finite-$N$ crossovers are $0.408$ and $0.695$, respectively. The change in the threshold across $N$ is therefore driven by the finite-$N$ variance gain; the limiting bias coefficient $rq/(3+rq)$ is held fixed. The exact finite-$N$ bias coefficient exceeds its limiting value by about $5.45\%$ at $N=16$ and $1.21\%$ at $N=100$.

\section{Proof for Section~\ref{sec:engagement}}\label{app:engagement-proof}

\begin{proof}[Proof of Proposition~\ref{prop:divergence}]
Holding the signal-related terms fixed, the accuracy objective is
$
W=-\frac1N\sum_i(V_i+d_i^2),
$
so its bias component is uniquely maximized at $d=0$.
For engagement, \eqref{eq:engagement} gives the bias-dependent loss
$
\mathcal L(d)
=
\sum_i(1-r_i)(d_i-b_i)^2
+
\sum_i r_i\sum_j\phi_{ij}(d_j-d_i)^2.
$
Maximizing $\Pi$ is equivalent to minimizing $\mathcal L$. Differentiating with respect to $d_i$ and dividing by two gives
\begin{equation}\label{eq:engagement-foc-app}
(1-r_i)(d_i-b_i)
+r_i\sum_{j\ne i}\phi_{ij}(d_i-d_j)
+\sum_{h\ne i}r_h\phi_{hi}(d_i-d_h)=0.
\end{equation}
The first term pulls $d_i$ toward the user's own ideal $b_i$. The second term pulls it toward users whom $i$ references, and the third term reflects the utilities of users who reference $i$.

The quadratic loss $\mathcal L$ is strictly convex. The coordination terms are positive semidefinite, while the own-action terms have strictly positive coefficients $1-r_i$. Hence \eqref{eq:engagement-foc-app} has a unique solution. The coefficient matrix has positive diagonal entries, nonpositive off-diagonal entries, and is strictly diagonally dominant by $1-r_i>0$; it is therefore a nonsingular symmetric $M$-matrix with an entrywise nonnegative inverse. In a connected component whose nonzero ideals all have the same sign, the unique engagement-optimal residuals inherit that sign. Unless all relevant ideals are zero, $d=0$ is therefore not engagement optimal. This proves that accuracy and engagement evaluate the bias channel differently.
\end{proof}

\section{Simulation of the liberal planner}\label{app:sim}

For each current network, the platform evaluates one candidate addition and one candidate deletion for every user. Users then choose among the proposed changes according to their own expected utility, and accepted row changes are applied simultaneously. The update index is algorithmic time: users are myopic, there is no intertemporal state variable other than the network, and no signal is carried across rounds.

\begin{algorithm}[ht]
\caption{Liberal-planner recommendation dynamics (one update round)}\label{alg:planner}
\KwIn{network $M$; parameters $(\alpha,\{\beta_i\},r,q,\{b_i\})$}
Compute baseline accuracy $W_0$ from $M$\;
\For{each user $i$}{
  \emph{addition:} among missing links $i\to j$ whose addition raises accuracy above $W_0$, choose the $j$ that produces the largest $W$; if no addition raises $W$, propose none\;
  \emph{deletion:} among links $i\to j$ whose deletion raises accuracy above $W_0$, choose the $j$ that produces the largest $W$; if no deletion raises $W$, propose none\;
}
\For{each user $i$}{
  evaluate the available accept/reject combinations for the proposed addition and deletion using user $i$'s expected utility at the current network $M$, and record the utility-maximizing row for user $i$\;
}
Apply all recorded row updates simultaneously\;
\Return updated network $M$\;
\end{algorithm}

Each candidate change is screened in isolation against the current network. Because accepted changes are implemented simultaneously, the platform objective need not improve monotonically from one update round to the next. A rest point is instead defined by the absence of any recommendation that would be proposed and accepted at the prevailing network. Friends-of-friends chooses additions with maximal overlap in observed sources and deletions with minimal overlap; the random benchmark samples eligible links uniformly.

The baseline calibration is $r=0.7$, $q=0.5$, $\alpha=30$, $\beta_i\sim\mathcal U\{10,\ldots,45\}$, $b_i\in\{-0.6,+0.6\}$, initial link probability $0.4$, and 60 update rounds. Figure~\ref{fig:formation} uses 100 paired runs at each of $N=20$ and $N=50$. The full candidate set is evaluated at $N=20$; at $N=50$ the platform screens six randomly sampled candidate additions and deletions per user. Figure~\ref{fig:engagement} uses the same six-candidate screen at both sizes. Pairing holds the initial network, precision draw, and random seed fixed across recommendation rules.

The equilibrium solver iterates the contraction implied by the best-reply system and solves the precision-independent intercept fixed point exactly. Baseline equilibria use tolerance $10^{-9}$ and candidate networks use warm starts. The replication package contains the code, 100-run series, confidence intervals, end-state decompositions, and two-camp parameter grids. In a separate scan of accuracy-improving additions, unbiased users accept $1812/1871=97\%$ of proposals. With $b_i\in\{-0.6,+0.6\}$, users accept $427/1320=32\%$; $659$ of the $893$ rejected proposals are opposite-sign links.

\subsection{Additional robustness for implementation and engagement}\label{app:engagement-robust}

The main text reports the central implementation statistics. In a separate scan of accuracy-improving additions, unbiased users accept $1812/1871=97\%$ of proposals. With $b_i\in\{-0.6,+0.6\}$, users accept $427/1320=32\%$; $659$ of the $893$ rejected proposals are opposite-sign links. Extending the recommendation horizon to 200 update rounds and increasing network size from $N=20$ to $N=50$ preserve the qualitative ordering of the recommendation rules.

For the engagement comparison, we also solve a finite-$N$ two-camp exercise under three $(r,q)$ calibrations: the baseline $(0.7,0.5)$, a strong-coordination variant $(0.85,0.7)$, and a weak-coordination variant $(0.5,0.3)$. There are $N=20$ users, ten in each bias camp. The intervention broadcasts a positive-bias candidate hub to the remaining positive-bias users while holding pre-existing links fixed. We set $\alpha=30$, ordinary precision $\beta=20$, and ordinary ideals $b_i=\pm0.6$, and vary the hub's precision and bias. Across all three calibrations there is a nonempty but bounded region with $\Delta W<0<\Delta\Pi$. For a precise hub ($\beta_c=45$) at the baseline, both objectives rise for $b_c\le0.50$, the disagreement region is $b_c\in[0.53,0.95]$, and beyond that interval $\Delta\Pi$ becomes negative as validation gives way to overshooting and coordination conflict. Strong coordination narrows the interval to $[0.60,0.70]$, while weak coordination widens it to $[0.58,1.23]$. Full sign maps and component decompositions are included in the replication materials.

To verify the decoupling mechanism numerically, we hold coordination weights uniform while changing the information network. The bias profile is then invariant to informational rewiring, as implied by $d=(I-R\Phi)^{-1}(I-R)b$. In the corresponding recommendation simulations, the engagement-minus-planner polarization gap is zero up to numerical precision. These exercises are robustness checks for the analytical channel decomposition rather than separate equilibrium results.

\end{document}